\newtheorem{theorem}{Theorem}
\newtheorem{lemma}{Lemma}
\newtheorem{definition}{Definition}
\newtheorem{proposition}{Proposition}
\date{}
\begin{document}

\title{Tradeoffs Between Information and Ordinal Approximation for Bipartite Matching\footnote{This work was partially supported by NSF award CCF-1527497.} }

%



\makeatletter
\renewcommand\@date{{%
  \vspace{-\baselineskip}%
  \large\centering
  \begin{tabular}{@{}c@{}}
    Elliot Anshelevich \\
    \normalsize eanshel@cs.rpi.edu
  \end{tabular}%
  \quad \quad
  \begin{tabular}{@{}c@{}}
    Wennan Zhu \\
    \normalsize zhuw5@rpi.edu
  \end{tabular}

  \bigskip
  
  Rensselaer Polytechnic Institute\\
  110 8th Street, Troy NY 12180\\

}}
\makeatother

\maketitle

\begin{abstract} We study ordinal approximation algorithms for maximum-weight bipartite matchings. Such algorithms only know the ordinal preferences of the agents/nodes in the graph for their preferred matches, but must compete with fully omniscient algorithms which know the true numerical edge weights (utilities). 
Ordinal approximation is all about being able to produce good results with only limited information. Because of this, one important question is how much better the algorithms can be as the amount of information increases. To address this question for forming high-utility matchings between agents in $\mathcal{X}$ and $\mathcal{Y}$, we consider three ordinal information types: when we know the preference order of only nodes in $\mathcal{X}$ for nodes in $\mathcal{Y}$,
when we know the preferences of both $\mathcal{X}$ and $\mathcal{Y}$,
and when we know the total order of the edge weights in the entire graph, although not the weights themselves.  We also consider settings where only the top preferences of the agents are known to us, instead of their full preference orderings. We design new ordinal approximation algorithms for each of these settings, and quantify how well such algorithms perform as the amount of information given to them increases.
\end{abstract}

\section{Introduction}
\label{section-intro}
Many important settings involve agents with preferences for different outcomes. Such settings include, for example, social choice and matching problems. Although the quality of an outcome to an agent may be measured by a numerical utility, it is often not possible to obtain these exact utilities when forming a solution. This can occur because eliciting numerical information from the agents may be too difficult, the agents may not want to reveal this information, or even because the agents themselves do not know the exact numerical values. On the other hand, eliciting {\em ordinal} information (i.e., the preference ordering of each agent over the outcomes) is often much more reasonable. Because of this, there has been a lot of recent work on {\em ordinal approximation algorithms}: these are algorithms which only use ordinal preference information as their input, and yet return a solution provably close to the optimum one (e.g., \cite{filos2014social,christodoulou2016social,anshelevich2015blind,anshelevich2016truthful,anshelevich2015approximating,skowron2017social,goel2016metric,feldman2016}). In other words, these are algorithms which only use limited ordinal information, and yet can compete in the quality of solution produced with omniscient algorithms which know the true (possibly latent) numerical utility information.


Ordinal approximation is all about being able to produce good results with only limited information. Because of this, it is important to quantify how well algorithms can perform as more information is given. If the quality of solutions returned by ordinal algorithms greatly improves when they are provided more information, then it may be worthwhile to spend a lot of resources in order to acquire such more detailed information. If, on the other hand, the improvement is small, then such an acquisition of more detailed information would not be worth it. Thus the main question we consider in this paper is: {\em How does the quality of ordinal algorithms improve as the amount of information provided increases?}

In this paper, we specifically consider this question in the context of computing a maximum-utility matching in a metric space. Matching problems, in which agents have preferences for which other agents they want to be matched with, are ubiquitous. The maximum-weight metric matching problem specifically provides solutions to important applications, such as forming diverse teams and matching in friendship networks (see \cite{anshelevich2015blind,anshelevich2016truthful} for much more discussion of this). Formally, there exists a complete undirected bipartite graph for two sets of agents $\mathcal{X}$ and $\mathcal{Y}$ of size $N$, with an edge weight $w(x,y)$ representing how much utility $x\in\mathcal{X}$ and $y\in\mathcal{Y}$ derive from their match; these edge weights satisfy the triangle inequality. The algorithms we consider, however, do not have access to such numerical edge weights: they are only given ordinal information about the agent preferences. The goal is to form a perfect matching between $\mathcal{X}$ and $\mathcal{Y}$, in order to approximate the maximum weight matching as much as possible using only the given ordinal information. We compare the weight of the matching returned by our algorithms with the true maximum-weight perfect matching in order to quantify the performance of our ordinal algorithms. 

\vskip 5pt\noindent{\bf Types of Ordinal Information~~}
Ordinal approximation algorithms for maximum weight matching have been considered before in \cite{anshelevich2015blind,anshelevich2016truthful}, although only for complete graphs; algorithms for bipartite graphs require somewhat different techniques. Our main contribution, however, lies in considering many types of ordinal information, forming different algorithms for each, and quantifying how much better types of ordinal information improve the quality of the matching formed. Specifically, we consider the following types of ordinal information.


\begin{itemize}

\item The most restrictive model we consider is {\em one-sided preferences}. That is, only preferences for agents in $\mathcal{X}$ over agents in $\mathcal{Y}$ are given to our algorithm. These preferences are assumed to be consistent with the (hidden) agent utilities, i.e., if $x$ prefers $y_1$ to $y_2$, then it must be that $w(x,y_1)\geq w(x,y_2)$. Such one-sided preferences may occur, for example, when $\mathcal{X}$ represents people and $\mathcal{Y}$ represents houses. People have preferences over different houses, but houses do not have preferences over people. These types of preferences also apply to settings in which both sides have preferences, but we only have access to the preferences of $\mathcal{X}$, e.g., because the agents in $\mathcal{Y}$ are more secretive. 

\item The next level of ordinal information we consider is {\em two-sided preferences}, that is, both preferences for agents in $\mathcal{X}$ over $\mathcal{Y}$ and agents in $\mathcal{Y}$ over $\mathcal{X}$ are given. This setting could apply to the situation that two sets of people are collaborating, and they have preferences over each other, or of a matching between job applicants and possible employers. As we consider the model in a metric space, the distance (weight) between two people could represent the diversity of their skills, and a person prefers someone with most diverse skills from him/her in order to achieve the best results of collaboration. 

\item The most informative model which we consider in this paper is that of {\em total-order}. That is, the order of all the edges in the bipartite graph is given to us, instead of only local preferences for each agent. In this model, global ordinal information is available, compared to the preferences of each agent in the previous two models. Studying this setting quantifies how much efficiency is lost due to the fact that we only know ordinal information, as opposed to the fact that we only know {\em local} information given to us by each agent. 
\end{itemize}

Comparing the results for the above three information types allows us to answer questions like: ``Is it worth trying to obtain two-sided preference information or total order information when only given one-sided preferences?" However, above we always assumed that for an agent $x$, we are given their entire preferences for all the agents in $\mathcal{Y}$. Often, however, an agent would not give their preference ordering for all the agents they could match with, and instead would only give an ordered list of their top preferences. Because of this, in addition to the three models described above, we also consider the case of {\em partial} ordinal preferences, in which only the top $\alpha$ fraction of a preference list is given by each agent of $\mathcal{X}$. Thus for $\alpha=0$ no information at all is given to us, and for $\alpha=1$ the full preference ordering of an agent is given.
Considering partial preferences tells us when, if there is a cost to buying information, we might choose to buy only part of the ordinal preferences. We establish tradeoffs between the percentage of available preferences and the possible approximation ratio for all three models of information above, and thus quantify when a specific amount of ordinal information is enough to form a high-quality matching.

\vskip 5pt\noindent{\bf Our Contributions~~}
We show that as we obtain more ordinal information about the agent preferences, we are able to form better approximations to the maximum-utility matching, even without knowing the true numerical edge weights. 
Our main results are shown in Figure \ref{fig:plot_figures}.


\begin{figure}[H]
  \centerline{
  \begin{minipage}[c]{0.55\textwidth}
    \includegraphics[width=\textwidth]{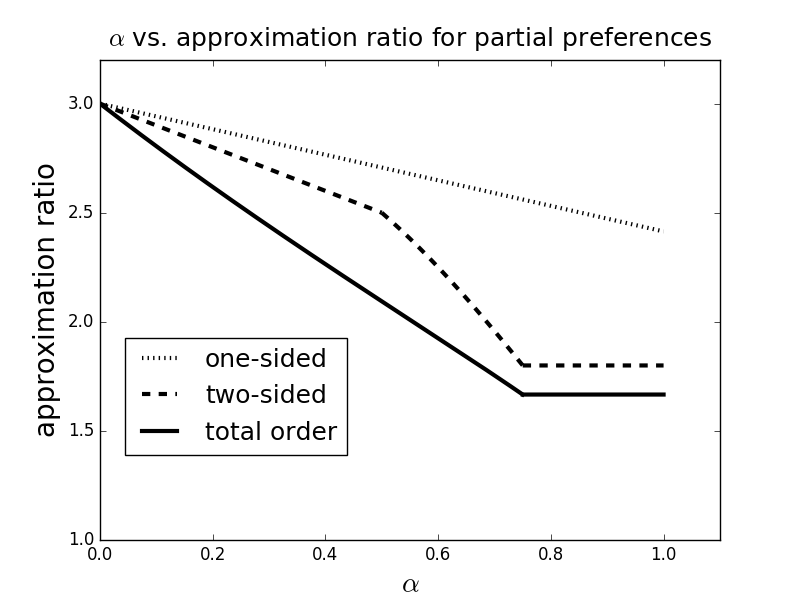}
  \end{minipage}\hfill
  \begin{minipage}[c]{0.45\textwidth}
    \caption{$\alpha$ vs. approximation ratio for partial information. As we obtain more information about the agent preferences ($\alpha$ increases), we are able to form better approximation to the maximum-weight matching. The tradeoff for one-sided preferences is linear, while it is more complex for two-sided and total order.} \label{fig:plot_figures}
  \end{minipage}}
\end{figure}

Using only one-sided preference information, with only the order of top $\alpha N$ preferences given for agents in $\mathcal{X}$, we are able to form a $(3 - (2 - \sqrt{2})\alpha)$-approximation. We do this by combining random serial dictatorship with purely random matchings. When $\alpha = 1$, the algorithm yields a $(\sqrt{2}+1)$-approximation. This is the first non-trivial analysis for the performance of $RSD$ on maximum bipartite matching in a metric space, and this analysis is one of our main contributions.

Given two-sided information, with the order of top $\alpha N$ preferences for agents in both $\mathcal{X}$ and $\mathcal{Y}$, we can do significantly better. When $\alpha \ge \frac{1}{2}$, adopting an existing framework in \cite{anshelevich2015blind}, by mixing greedy and random algorithms, and adjusting it for bipartite graphs, we get a $\frac{(3 - 2\alpha)(3 - \alpha)}{2\alpha^2-3\alpha+3}$-approximation. When $\alpha \le \frac{1}{2}$, the framework would still work, but would not produce a good approximation. We instead design a different algorithm to get better results. Inspired by $RSD$, we take advantage of the information of preferences from both sets of agents, adjust $RSD$ to obtain ``undominated" edges in each step, and finally combine it with random matchings to get a $(3 - \alpha)$-approximation. When $\alpha \ge \frac{3}{4}$, the algorithm yields a $1.8$-approximation.

For the total-ordering model, the order of top $\alpha N^2$ heaviest edges in the bipartite graph is given. We use the framework in \cite{anshelevich2015blind} again to obtain a $\frac{2 + \sqrt{1 - \alpha}}{2 - \sqrt{1 - \alpha}}$-approximation. Here we must re-design the framework to deal with the cases that $\alpha \le \frac{3}{4}N$, which is not a straight-forward adjustment. When $\alpha \ge \frac{3}{4}N$ the algorithm yields a $\frac{5}{3}$-approximation.



Finally, in Section \ref{section-onesided-restricted} we analyze the case when edge weights cannot be too different: the highest weight edge is at most $\beta$ times the lowest weight edge in one-sided model. When the edge weights have this relationship, we can extend our analysis to give a $(\sqrt{\beta - \frac{3}{4}} + \frac{1}{2})$-approximation, even without assuming that edge weights form a metric.


\vskip 5pt\noindent{\bf Discussion and Related Work~~}
Previous work on forming good matchings can largely be classified into the following classes. First, there is a large body of work assuming that numerical weights or utilities don't exist, only ordinal preferences. Such work studies many possible objectives, such as forming stable matchings (see e.g., \cite{roth1992two,rastegari2013two}), or maximizing objectives determined only by the ordinal preferences (e.g., \cite{abrahamIKM07,chakrabartyS14}). Second, there is work assuming that numerical utilities or weights exist, and are {\em known} to the matching designer. 
Unlike the above two settings, we consider the case when numerical weights {\em exist}, but are latent or {\em unknown}, and yet the goal is to approximate the true social welfare, i.e., maximum weight of a perfect matching. Note that although some previous work assumes that all numerical utilities are known, they often still use algorithms which only require ordinal information, and thus fit into our framework; we discuss some of these results below.




Similar to our one-sided model, house allocation \cite{abdulkadirouglu1998random} is a popular model of assigning $n$ agents to $n$ items. 
\cite{bhalgat2011social} studied the ordinal welfare factor and the linear welfare factor of RSD and other ordinal algorithms. \cite{krysta2014size} studied both maximum matching and maximum vertex weight matching using an extended RSD algorithm. These either used objectives depending only on ordinal preferences, such as the size of the matching formed, or used node weights (as opposed to edge weights). \cite{filos2014social} and \cite{christodoulou2016social} assumed the presence of numerical agent utilities and studied the properties of RSD. Crucially, this work assumed normalized agent utilities, such as unit-sum or unit-range. This allowed \cite{filos2014social,christodoulou2016social} to prove approximation ratios of $\Theta(\sqrt{n})$ for RSD. Instead of assuming that agent utilities are normalized, we consider agents in a metric space; this different correlation between agent utilities allows us to prove much stronger results, including a constant approximation ratio for RSD. Kalyanasundaram et al. studied serial dictatorship for maximum weight matching in a metric space \cite{kalyanasundaram1991line}, and gave a 3-approximation for RSD in this, while we are able to get a tighter bound of 2.41-approximation.\footnote{Note that many of the papers mentioned here specifically attempt to form {\em truthful} algorithms. While RSD is certainly truthful, in this paper we attempt to quantify what can be done using ordinal information in the presence of latent numerical utilities, and leave questions of truthfulness to future work.}


Besides maximizing social welfare, minimizing the social cost of a matching is also popular. \cite{caragiannis2016truthful} studied the approximation ratio of RSD and augmentation of serial dictatorship (SD) for minimum weight matching in a metric space. Their setting is very similar to ours, except that we consider the maximization problem, which has different applications \cite{anshelevich2015blind,anshelevich2016truthful}, and allows for a much better approximation factor (constant instead of linear in $n$) using different techniques.


Another area studying ordinal approximation algorithms is social choice, where the goal is to decide a single winner in order to maximize the total social welfare. This is especially related to our work when the hidden utilities of voters are in a metric space (see e.g., \cite{anshelevich2015approximating,skowron2017social,goel2016metric,feldman2016}),

The work most related to ours is \cite{anshelevich2015blind,anshelevich2016truthful}. As mentioned above, we use an existing framework \cite{anshelevich2015blind} for the two-sided and the total-order model. While the goal is the same: to approximate the maximum weight matching using ordinal information, this paper is different from \cite{anshelevich2015blind} in several aspects. \cite{anshelevich2015blind} only considered approximating the true maximum weight matching for non-bipartite complete graphs. We instead focus on bipartite graphs, and especially on considering different levels of ordinal information by analyzing three models with increasing amount of information, and also consider partial preferences. Although we use similar techniques for parts of two-sided and total-order model analysis, they need significant adjustments to deal with bipartite graphs and partial preferences; moreover, the method used for analyzing the one-sided model is quite different from \cite{anshelevich2015blind}. 

\section{Model and Notation}
\label{section-model}
For all the problems studied in this paper, we are given as input two sets of agents  $\mathcal{X}$ and $\mathcal{Y}$ with $|\mathcal{X}| = |\mathcal{Y}| = N $. $G = (\mathcal{X}, \mathcal{Y}, E)$ is an undirected complete bipartite graph with weights on the edges. 
We assume that the agent preferences are derived from a set of underlying hidden edge weights $w(x, y)$ for each edge $(x, y)$, $x \in \mathcal{X}, y \in \mathcal{Y}$. $w(x,y)$ represents the utility of the match between $x$ and $y$, so if $x$ prefers $y_1$ to $y_2$, then it must be that $w(x,y_1)\geq w(x,y_2)$. Let $OPT(G)$ denote the complete bipartite matching that gives the maximum total edge weights. $w(G)$ of any bipartite graph $G$ is the total edge weight of the graph, and $w(M)$ of any matching $M$ is the total weight of edges in the matching. The agents lie in a metric space, by which we will only mean that, $\forall x_1, x_2 \in \mathcal{X}, \forall y_1, y_2 \in \mathcal{Y}, w(x_1, y_1) \le w(x_1, y_2) + w(x_2, y_1) + w(x_2, y_2)$. We assume this property in all sections except for Section~\ref{section-onesided-restricted}.

For the setting of one-sided preferences, $\forall x \in \mathcal{X}$, we are given a strict preference ordering $P_x$ over the agents in $\mathcal{Y}$. When dealing with partial preferences, only top $\alpha N$ agents in $P_x$ are given to us in order. We assume $\alpha N$ is an integer, $\alpha \in [0, 1]$. Of course, when $\alpha=0$, nothing can be done except to form a completely random matching. For two-sided partial preferences, we are given both the top $\alpha$ fraction of preferences $P_x$ of agents $x$ in $\mathcal{X}$ over those in $\mathcal{Y}$, and vice versa. For the total order setting, we are given the order of the highest-weight $\alpha N^2$ edges in the complete bipartite graph $G=(\mathcal{X}, \mathcal{Y}, E)$.

\section{One-sided Ordinal Preferences}\label{section-onesided}


For one-sided preferences, our problem becomes essentially a house allocation problem to maximize social welfare, see e.g., \cite{christodoulou2016social,krysta2014size,filos2014social}. 
Before we proceed, it is useful to establish a baseline for what approximation factor is reasonable. Simply picking a matching uniformly at random immediately results in a 3-approximation (see Theorem \ref{perfect-random-bound}), and there are examples showing that this bound is tight. Other well-known algorithms, such as Top Trading Cycle, also cannot produce better than a 3-approximation to the maximum weight matching for our setting. Serial Dictatorship, which uses only one-sided ordinal information, is also known to give a 3-approximation to the maximum weight matching for our problem~\cite{kalyanasundaram1991line}. Serial Dictatorship simply takes an arbitrary agent from $x\in\mathcal{X}$, assigns it $x$'s favorite unallocated agent from $\mathcal{Y}$, and repeats. Unfortunately, it is not difficult to show that this bound of 3 is tight. Our first major result in this paper is to prove that {\em Random} Serial Dictatorship always gives a $(\sqrt{2} + 1)$-approximation in expectation, no matter what the true numerical weights are, thus giving a significant improvement to all the algorithms mentioned above.

\vspace{3mm}
\begin{algorithm}[H]
\caption{Random Serial Dictatorship for Perfect Matching of one-sided ordering.}
\label{onesided-rsd}
  Initialize $M =  \emptyset $, $G = (\mathcal{X}, \mathcal{Y}, E)$ \;
   \While {$E \ne \emptyset$} {
     Pick an agent $x$ uniformly at random from $\mathcal{X} $ \;
     Let $y$ denote $x$'s most preferred agent in $\mathcal{Y} $ \;
     Take  $e = (x,y)$ from $E$ and add it to $M$ \;
     Remove $x$, $y$, and all edges containing $x$ or $y$ from the graph $G$ \;
    }
  \textbf{Final Output:} Return $M$.
\end{algorithm}
\vspace{3mm}


\begin{theorem}
\label{onesided-rsd-bound}
  Suppose $G=(\mathcal{X}, \mathcal{Y}, E)$ is a complete bipartite graph on the set of nodes $\mathcal{X}, \mathcal{Y}$ with $|\mathcal{X}| = |\mathcal{Y}| = N$. Then, the expected weight of the perfect matching $M$ returned by Algorithm~\ref{onesided-rsd} is $E[w(M)] \ge \frac{1}{\sqrt{2}+1}w(OPT(G))$.
\end{theorem}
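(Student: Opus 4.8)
The plan is to compare the matching $M$ produced by Algorithm~\ref{onesided-rsd} with a fixed maximum-weight matching $O := OPT(G)$ by means of a charging scheme built on the bipartite triangle inequality, and then to use the randomness of the processing order to improve the resulting factor from $3$ down to $\sqrt2+1$.

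First, I would fix the random order and write it as $x_1,\dots,x_N$, where $x_i$ is matched by the algorithm to $m_i\in\mathcal Y$, its favorite among the $\mathcal Y$-agents still available at step $i$. Call an index $i$ \emph{good} if $O(x_i)$ is still available when $x_i$ is processed, and \emph{bad} otherwise. If $i$ is good, the greedy choice immediately gives $w(x_i,m_i)\ge w(x_i,O(x_i))$. If $i$ is bad, the agent $x_j$ that claimed $O(x_i)$ was processed earlier ($j<i$), and since $m_i$ was still available at step $j$, the greedy choice of $x_j$ gives $w(x_j,m_j)\ge w(x_j,m_i)$; feeding this into the bipartite triangle inequality applied to $x_i,x_j,\,O(x_i)=m_j,\,m_i$ yields $w(x_i,O(x_i))\le w(x_i,m_i)+2\,w(x_j,m_j)$. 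The map $i\mapsto j(i)$ on bad indices is injective (it is determined by $O^{-1}(m_{j(i)})$), so summing over all $i$ gives the deterministic inequality $w(O)\le w(M)+2\sum_{j\in C}w(x_j,m_j)$, where $C$ is the set of algorithm-edges that get charged; unpacking the definitions, $j\in C$ precisely when the $OPT$-owner $O^{-1}(m_j)$ of $m_j$ is processed after step $j$. Since $C\subseteq\{1,\dots,N\}$ this already recovers the $3$-approximation of~\cite{kalyanasundaram1991line}.

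Second, I would use the random order to control $E\big[\sum_{j\in C}w(x_j,m_j)\big]$ and extract the extra savings. The point is that an algorithm-edge $(x_j,m_j)$ chosen while a large fraction of $\mathcal Y$ is still available must be heavy relative to the still-present $OPT$-edges, whereas an edge landing in $C$ corresponds to $x_j$ ``stealing'' an $OPT$-partner, which forces $x_j$ to precede $O^{-1}(m_j)$ in a uniformly random order. Introducing a threshold $\theta\in(0,1)$ and splitting the accounting into direct credit from good indices that are processed early (this contributes a term of order $\theta(1-\theta)$) versus triangle-inequality credit from the remaining indices, then optimizing over $\theta$, produces a quadratic whose relevant root is $\sqrt2+1$; the corresponding choice of $\theta$ gives $E[w(M)]\ge\frac{1}{\sqrt2+1}w(O)$.

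The main obstacle is precisely this second step: the set $C$ and the weights $w(x_j,m_j)$ are themselves functions of the random order, so one cannot simply assign probability $\tfrac12$ to the event ``$x_j$ precedes $O^{-1}(m_j)$''. The argument must be arranged so that the randomness used to bound these events is clean relative to the information that determines which edge $x_j$ picks --- for instance by revealing the permutation step by step and bounding conditional probabilities given the history, or by a symmetry/coupling argument over pairs of $\mathcal X$-agents. Getting this conditioning correct, together with the threshold that makes the two sources of credit balance, is the technical heart of the proof and is where the quantitative bound $\sqrt2+1$ (rather than $3$) is actually won.
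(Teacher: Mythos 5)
Your first step is sound: the good/bad decomposition plus the bipartite triangle inequality and the injective charging map does give the deterministic bound $w(O)\le w(M)+2\sum_{j\in C}w(x_j,m_j)$, which recovers the known $3$-approximation of serial dictatorship. But everything that is supposed to take you from $3$ to $\sqrt2+1$ is missing. The second step is a plan, not an argument: you never define how the threshold $\theta$ enters the accounting, never write the inequality whose optimization "produces a quadratic whose relevant root is $\sqrt2+1$," and you explicitly defer the conditioning issue ("the technical heart of the proof") that you yourself identify as the obstacle. Since $C$, the identity of $m_j$, and the weights $w(x_j,m_j)$ are all determined by the random order, the heuristic that each algorithm edge is charged with probability roughly $\tfrac12$ has no justification as stated, and no substitute (step-by-step revelation with conditional bounds, a coupling, a symmetrization over pairs) is actually carried out. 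As written, the proposal proves only the $3$-approximation.

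It is also worth noting that the paper's route is structurally different and that its key ingredient has no counterpart in your sketch. The paper argues by induction over RSD rounds via the dichotomy of Lemma~\ref{onesided-structural}: in each round, either the expected loss in $OPT$ is at most $(\sqrt2+1)$ times the expected weight of the chosen edge (Case~1), or $w(OPT(S))\le(\sqrt2+1)\,w(Min(S))$ for the \emph{minimum}-weight perfect matching $Min(S)$ (Case~2), in which case RSD trivially beats $Min(S)$. Case~1 itself is proved by routing the triangle inequality through the $Min$ matching, giving $\sum_{x}\bigl(w(\bar P(x))-w(D(x))\bigr)\le w(Min(S))+\sum_x w(\lambda(x))$, and the constant $\sqrt2+1$ arises as the fixed point of $c=2+1/c$. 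The paper's Figure~2 example shows why some such global ingredient is necessary: there the per-step loss of $OPT$ is nearly $3$ times the expected chosen edge weight, so any purely local charging of $OPT$ edges to algorithm edges (which is what your scheme does, even after averaging over the order) cannot by itself beat $3$; the improvement must come from a case like Case~2 that compares $OPT$ to another global matching. Your proposal contains nothing playing this role, so beyond the $3$-approximation the claimed bound is not established.
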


\begin{proof}
  \textbf{Notation}: Consider a bipartite subgraph $S \subseteq G$, that satisfies $S = (\mathcal{X}', \mathcal{Y}', E'), \ \mathcal{X}' \subseteq \mathcal{X}, \   \mathcal{Y}' \subseteq \mathcal{Y}$, and $|\mathcal{X}'| = |\mathcal{Y}'|$. Let $Min(S)$ denote a {\em minimum} weight perfect matching on $S$, and $RSD(S)$ denote the expected weight returned by Algorithm~\ref{onesided-rsd} on graph $S$.

  For any $x \in \mathcal{X}'$, we use $\lambda(S,x)$ to denote the edge between $x$ and its most preferred agent in $\mathcal{Y}'$. Define $R(S, x)$ as the remaining graph after removing $x$, $x$'s most preferred agent, and all the edges containing $x$ or $x$'s most preferred agent from $S$.

  We begin by simply expressing $RSD(S)$ in terms of these quantities.
  \begin{lemma}
  \label{onesided-weight}
    For any subgraph $S$ as decribed above, \\
    $RSD(S) = \frac{1}{|\mathcal{X}'|} \sum_{x \in \mathcal{X}'} w(\lambda(S,x)) + \frac{1}{|\mathcal{X}'|} \sum_{x \in \mathcal{X}'} RSD(R(S, x))$.
  \end{lemma}
  \begin{proof}
    This simply follows from definition of expectation. In the first round of Algorithm~\ref{onesided-rsd}, an agent $x$ is selected uniformly at random from $\mathcal{X}'$. Given that $x$ is selected, the edge added to the matching is exactly $\lambda(S,x)$, and the expected weight of the matching for the remaining graph is exactly $RSD(R(S,x))$. Each of these occurs with probability $1/|\mathcal{X}'|$.
  \end{proof}

  We now state the main technical lemma which allows us to prove the result. This lemma gives a bound on the maximum weight matching in terms of the quantities defined above.

  \begin{lemma}
    \label{onesided-structural}
    For any given graph $G = (\mathcal{X}, \mathcal{Y}, E)$, one of the following two cases must be true:\\
    \textbf{Case 1}:  $w(OPT(G)) \le \frac{1}{|\mathcal{X}|} \sum_{x \in \mathcal{X}} w(OPT(R(x))) + \frac{\sqrt{2}+1}{|\mathcal{X}|}\sum_{x \in \mathcal{X}} w(\lambda(x))$\\
    \textbf{Case 2}: $w(OPT(G)) \le (\sqrt{2}+1)w(Min(G))$
  \end{lemma}

  We will prove this lemma below, but first we discuss how the rest of the proof will proceed. When Case 1 above holds, we know that at any step of the algorithm, the change in the weight of the optimum solution in the remaining graph is not that different from the weight of the edge selected by our algorithm. This allows us to compare the weight of $OPT$ with the weight of the matching returned by our algorithm. In fact, this is the technique used in a previous paper~\cite{anshelevich2016truthful} to analyze RSD for complete graphs (i.e., non-bipartite graphs), and show that RSD gives a 2-approximation for perfect matching on complete graphs.
  Similar to Case 1 in Lemma~\ref{onesided-structural}, this was done by proving that in each step, the expected loss of optimal matching is at most twice the expected weight of the chosen edge, and thus the emtire algorithm gives a 2-approximation.

  It is important to note here that this {\em does not} work for bipartite graphs. In bipartite matching, using only this method will not give an approximation ratio better than 3. To see this, consider the bipartite graph in Figure~\ref{fig:rsd-example}. Suppose $G = (\mathcal{X}, \mathcal{Y}, E)$ is a complete bipartite graph, $|\mathcal{X}| = |\mathcal{Y}| = N$. The edges shown in the Figure are the maximum weight matching of $G$; all the other edges have weight of 1. It is easy to see that these edge weights form a metric. $\forall x \in \mathcal{X}$, $x$'s most preferred agent in $\mathcal{Y}$ is $y_1$, second preferred agent is $y_2$, ..., least preferred agent is $y_n$ (we can always perturb the edge weights by an infinitesimal amount to remove ties for this example). Then the weight of the optimum solution is $w(OPT(G)) = (N-1) + 3$. In this example, the expected decrease in the weight of the optimal matching in the first step of RSD is $3$: choosing $x_1$ loses 3, and choosing any other agent $x_i$ in $\mathcal{X}$ loses 3 since $(x_1,y_1)$ and $(x_i,y_i)$ can no longer ne used (decrease of 4), but the edge $(x_1,y_i)$ can be used (increase of 1). On the other hand, the expected weight of the edge chosen by RSD is $\frac{3+(N-1)}{N}$. In this case, almost 3 times the expected weight of the chosen edge is needed to compensate for the loss of optimal matching, so the inequality in ``Case 1" above only holds if we replace $\sqrt{2}+1$ with 3, and thus would only result in a 3-approximation.

   \begin{figure}[H]
 \begin{center}
 \includegraphics[scale=0.5]{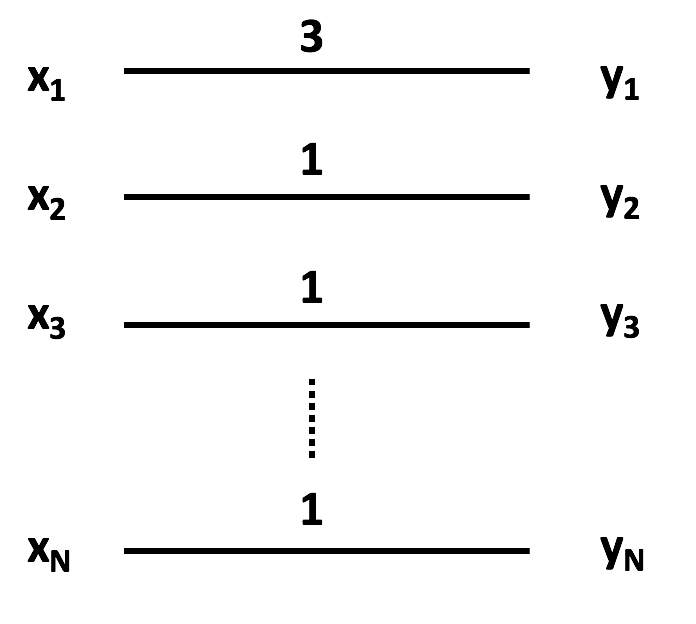}
 \end{center}
 \caption{An example graph for RSD.}
 \label{fig:rsd-example}
 \end{figure}

 We get around this problem by adding Case 2 to our lemma. We are able to show that in any step, either the expected loss of the weight of the optimal matching is at most $(\sqrt{2}+1)$ times of expected weight of the chosen edge, or the weight of the optimal matching is at most $(\sqrt{2}+1)$ times weight of the minimum weight matching. By combining these two cases, we can easily show the following claim which completes the proof of the theorem.

   \begin{proposition}
   \label{onesided-proposition}
     As long as Lemma~\ref{onesided-structural} is obeyed for every $S$, Algorithm~\ref{onesided-rsd} provides a $(\sqrt{2}+1)$-approximation to the Maximum weight perfect matching: $RSD(G) \ge \frac{1}{\sqrt{2}+1}w(OPT(G))$.
   \end{proposition}

   \begin{proof}
     We proceed by induction. Clearly when $G$ only has two agents, RSD produces the optimum matching.
     Now consider a bipartite graph $G = (\mathcal{X}, \mathcal{Y}, E)$ with $|\mathcal{X}|=|\mathcal{Y}|=N$, and suppose that the claim is true for all smaller graphs, i.e., $\forall x \in \mathcal{X}$, we know that $RSD(R(G,x)) \ge \frac{1}{\sqrt{2}+1}w(OPT(R(G,x)))$.

     If Case 2 in Lemma~\ref{onesided-structural} holds for $G$, then because $Min(G)$ is the minimum weight perfect matching, we know that $w(Min(G)) \le RSD(G)$. So $RSD(G) \ge \frac{1}{\sqrt{2}+1}w(OPT(G))$. Otherwise Case 1 in Lemma~\ref{onesided-structural} must be true.
     \begin{equation*}
      w(OPT(G)) \le \frac{1}{N} \sum_{x \in \mathcal{X}} w(OPT(R(G,x))) + \frac{\sqrt{2}+1}{N}\sum_{x \in \mathcal{X}} w(\lambda(G,x))
     \end{equation*}

     By our assumption,
     \begin{equation*}
      w(OPT(G)) \le \frac{\sqrt{2}+1}{N} \sum_{x \in \mathcal{X}} RSD(R(G,x)) + \frac{\sqrt{2}+1}{N}\sum_{x \in \mathcal{X}} w(\lambda(G,x))
     \end{equation*}

     This completes the proof by Lemma~\ref{onesided-weight}.
   \end{proof}

We now proceed with the main technical part of the proof, i.e., the proof of Lemma \ref{onesided-structural}.

 \paragraph{Proof of Lemma~\ref{onesided-structural}} For compactness of notation, since $S$ is fixed, we will omit $S$ and simply write $\lambda(x)$ and $R(x)$ instead of $\lambda(S,x)$ and $R(S,x)$. For any fixed $x \in \mathcal{X}'$, denote $x$'s most preferred agent in $\mathcal{Y}'$ as $y$ (so $\lambda(x)=(x,y)$). In $OPT(S)$, suppose $x$ is matched to $b \in \mathcal{Y}'$, and $y$ is matched to $a \in \mathcal{X}'$. In $Min(S)$, suppose $b$ is matched to $m \in \mathcal{X}'$. $\forall x \in \mathcal{X}'$, there exist $y$, $a$, $b$, $m$ as described above. As shown in Figure~\ref{fig:rsd}, denote edge $(x, y)$ by $\lambda(x)$, $(x, b)$ by $P(x)$, $(a, y)$ by $\bar{P}(x)$, and $(a, b)$ by $D(x)$.

  \begin{figure}[htb]
  \begin{center}
  \includegraphics[scale=0.5]{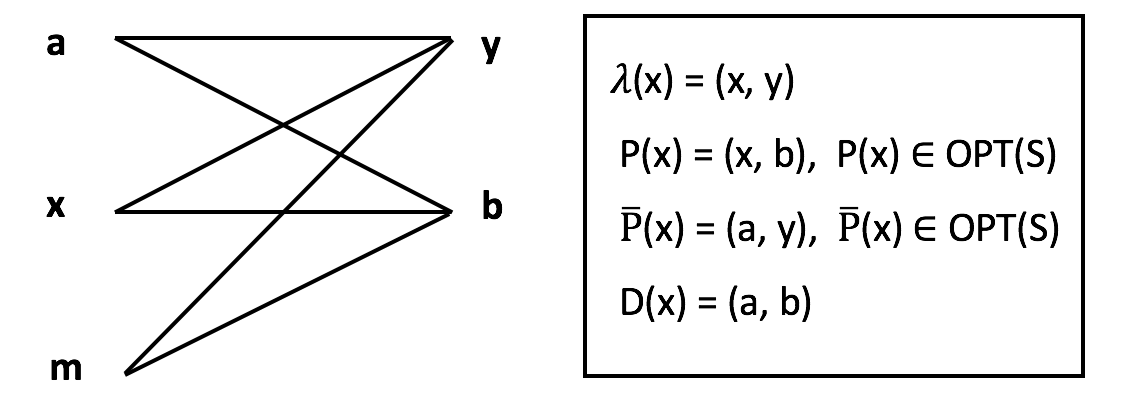}
  \end{center}
  \caption{Notation of $\lambda(x)$, $P(x)$, $\bar{P}(x)$, $D(x)$.}
  \label{fig:rsd}
  \end{figure}

  We'll prove Lemma~\ref{onesided-structural} by showing that if \textbf{Case 2} is not true, then  \textbf{Case 1} must be true. Suppose  \textbf{Case 2} is not true, i.e., $w(OPT(S)) > (\sqrt{2}+1)w(Min(S))$. \\

  Suppose that random serial dictatorship picks $x \in \mathcal{X}'$. Then $OPT(R(S, x))$ is at least as good as the matching obtained by removing $P(x)$ and $\bar{P}(x)$, and adding $D(x)$ to $OPT(S)$ (the rest stay the same):
  \begin{equation*}
  w(OPT(R(x))) \ge w(OPT(S)) - w(P(x)) - w(\bar{P}(x)) + w(D(x))
  \end{equation*}
  Note that when $\lambda(x) \in OPT(S)$, $\bar{P}(x) = P(x) = D(x)$, and the inequality still holds. Summing this up over all nodes $x$, we obtain:

  \begin{align}
  \frac{1}{|\mathcal{X}'|} \sum_{x \in \mathcal{X}'} w(OPT(R(x)))
  & \ge \frac{1}{|\mathcal{X}'|} \sum_{x \in \mathcal{X}'} (w(OPT(S)) - w(P(x)) - w(\bar{P}(x)) + w(D(x))) \nonumber \\
  & = w(OPT(S)) - \frac{1}{|\mathcal{X}'|} \sum_{x \in \mathcal{X}'} (w(P(x)) + w(\bar{P}(x)) - w(D(x))) \nonumber \\
  & = (1-\frac{1}{|\mathcal{X}'|})w(OPT(S)) - \frac{1}{|\mathcal{X}'|} \sum_{x \in \mathcal{X}'} ( w(\bar{P}(x)) - w(D(x))) \label{rsd-eq1}
  \end{align}

  In Figure~\ref{fig:rsd}, by the triangle inequality, we know that
  \begin{equation*}
    w(a, y) \le w(a, b) + w(m, b) + w(m, y)
  \end{equation*}
  Note that when $y = b$ the inequality still holds, because $w(a, y) = w(a, b)$. It also holds when $a = m$ for the same reason.

  Because $\lambda(m)$ is the edge to $m$'s most preferred agent, $w(m, y) \le w(\lambda(m))$, and thus
  \begin{equation*}
    w(\bar{P}(x)) \le w(D(x)) + w(m, b) + w(\lambda(m)))
  \end{equation*}

  Summing this up for all $x \in \mathcal{X}'$, note that each $x$ is matched to a unique $b$ in $OPT(S)$, and each $b$ is matched to a unique $m$ in $Min(S)$, so each agent in $\mathcal{Y}'$ appears as $b$ exactly once and each agent in $\mathcal{X}'$ appears as $m$ exactly once.

  \begin{equation*}
  \sum_{x \in \mathcal{X}'} w(\bar{P}(x)) \le \sum_{x \in \mathcal{X}'} w(D(x)) + w(Min(S)) + \sum_{x \in \mathcal{X}'} w(\lambda(x)))
  \end{equation*}
  \begin{equation}
  \label{rsd-eq2}
  \sum_{x \in \mathcal{X}'} (w(\bar{P}(x)) -  w(D(x)))\le  w(Min(S)) + \sum_{x \in \mathcal{X}'} w(\lambda(x)))
  \end{equation}

  Combining Inequality~\ref{rsd-eq1} and Inequality~\ref{rsd-eq2},
  \begin{equation}
  \label{rsd-eq3}
  \frac{1}{|\mathcal{X}'|} \sum_{x \in \mathcal{X}'} w(OPT(R(x))) \ge (1- \frac{1}{|\mathcal{X}'|})w(OPT(S)) - \frac{1}{|\mathcal{X}'|} [w(Min(S)) + \sum_{x \in \mathcal{X}'} w(\lambda(x))]
  \end{equation}

  $\forall x \in \mathcal{X}'$, $w(P(x)) \le w(\lambda(x))$ since $\lambda(x)$ is the most preferred edge of $x$, so it is obvious that $w(OPT(S)) \le \sum_{x \in \mathcal{X}'} w(\lambda(x))$.\\

  By our assumption,
  \begin{equation*}
  w(Min(S)) < \frac{1}{\sqrt{2}+1}w(OPT(S)) \le \frac{1}{\sqrt{2}+1} \sum_{x \in \mathcal{X}'} w(\lambda(x))
\end{equation*}

  Thus, putting this together with Inequality~\ref{rsd-eq3}, we obtain that,
  \begin{align*}
  \frac{1}{|\mathcal{X}'|} \sum_{x \in \mathcal{X}'} w(OPT(R(x)))
  &\ge w(OPT(S)) - \frac{1}{|\mathcal{X}'|} (2 + \frac{1}{\sqrt{2} + 1}) \sum_{x \in \mathcal{X}'} w(\lambda(x)))\\
  &= w(OPT(S)) - \frac{\sqrt{2} + 1}{|\mathcal{X}'|} \sum_{x \in \mathcal{X}'} w(\lambda(x))
  \end{align*}
\end{proof}

\subsection*{Partial One-sided Ordinal Preferences}
  In this section, we consider the case when we are given even less information than in the previous one, i.e., only partial preferences. 
We begin by establishing the following easy result for the completely random algorithm.

\setcounter{algocf}{7}

\vspace{3mm}
\begin{algorithm}[H]
\caption{Random Algorithm for Perfect Bipartite Matching.}
\label{random}
  Initialize $M =  \emptyset $, $G=(\mathcal{X}, \mathcal{Y}, E)$ \;
   \While {$E \ne \emptyset$} {
     Pick an edge $e = (x,y)$ from $E$ uniformly at random and add it to $M$ \;
     Remove $x$, $y$, and all edges containing $x$ or $y$ from $G$ \;
    }
  \textbf{Final Output:} Return $M$.
\end{algorithm}
\vspace{3mm}

\setcounter{algocf}{1}

\begin{lemma}
\label{random-bound}
  Suppose $G=(\mathcal{X}, \mathcal{Y}, E)$ is a complete bipartite graph on the set of nodes $\mathcal{X}, \mathcal{Y}$ with $|\mathcal{X}| = |\mathcal{Y}| = N$. Then, the expected weight of the random perfect matching returned by Algorithm~\ref{random} for the input $G$ is $E[w(M)] = \frac{1}{N}\sum_{(x, y) \in E}w(x, y)$.
\end{lemma}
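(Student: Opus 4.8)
The plan is to compute $E[w(M)]$ by linearity of expectation over edges, reducing everything to finding, for each edge $e=(x,y)\in E$, the probability $\Pr[e\in M]$ that $e$ appears in the matching produced by Algorithm~\ref{random}. Once that probability is known, $E[w(M)] = \sum_{(x,y)\in E} w(x,y)\cdot\Pr[(x,y)\in M]$, so it suffices to show that $\Pr[(x,y)\in M] = 1/N$ for every edge.

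To obtain this probability I would use a symmetry argument rather than a direct calculation. The complete bipartite graph $K_{N,N}=(\mathcal{X},\mathcal{Y},E)$ is unchanged, as a labeled graph, under relabeling the vertices of $\mathcal{X}$ by any permutation $\sigma$ and the vertices of $\mathcal{Y}$ by any permutation $\tau$; moreover, at every iteration Algorithm~\ref{random} picks a remaining edge uniformly at random and then deletes vertices purely according to graph structure. Hence the distribution of the output matching $M$ is invariant under any such relabeling, and in particular $\Pr[(x,y)\in M] = \Pr[(\sigma(x),\tau(y))\in M]$ for all $\sigma,\tau$. Since any edge can be sent to any other edge by a suitable choice of $\sigma$ and $\tau$, the probability $p := \Pr[(x,y)\in M]$ is the same for every edge $(x,y)\in E$.

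Finally, Algorithm~\ref{random} always returns a perfect matching: each iteration removes exactly one vertex of $\mathcal{X}$ and one of $\mathcal{Y}$, so after $N$ iterations $E=\emptyset$ and $|M|=N$. Therefore $\sum_{(x,y)\in E}\Pr[(x,y)\in M] = E[|M|] = N$. As $|E|=N^2$ and every summand equals $p$, we get $N^2 p = N$, i.e.\ $p = 1/N$. Substituting back gives $E[w(M)] = \frac1N\sum_{(x,y)\in E} w(x,y)$, as claimed.

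There is no real obstacle here; the only point that needs care is justifying that the edge-inclusion probabilities are all equal. If one prefers to avoid the symmetry argument, the same value $p=1/N$ follows by induction on $N$: condition on the first edge chosen (each of the $N^2$ edges equally likely); if it equals $(x,y)$ then $(x,y)\in M$, if it shares exactly one endpoint with $(x,y)$ then $(x,y)\notin M$, and in the remaining $(N-1)^2$ cases $(x,y)$ survives into a fresh instance on $K_{N-1,N-1}$ where by the inductive hypothesis it is included with probability $\tfrac1{N-1}$; summing gives $\tfrac1{N^2} + \tfrac{(N-1)^2}{N^2}\cdot\tfrac1{N-1} = \tfrac1N$, with base case $N=1$ immediate.
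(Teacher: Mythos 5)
Your proposal is correct. Note that the paper itself does not prove this lemma at all: it simply states ``This lemma was proved in [Anshelevich--Sekar]'' and moves on, so you are supplying a self-contained argument where the paper defers to a citation (and the argument in that reference is the same standard one). Your route is the natural one: Algorithm~\ref{random} always outputs a perfect matching of size $N$, the uniform edge choices make the output distribution invariant under independent relabelings of $\mathcal{X}$ and $\mathcal{Y}$, hence every edge of the $N^2$ edges lies in $M$ with the same probability $p$, and $N^2 p = E[|M|] = N$ forces $p = 1/N$; linearity of expectation then gives $E[w(M)] = \frac{1}{N}\sum_{(x,y)\in E} w(x,y)$ exactly as claimed. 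Your fallback induction is also sound: conditioning on the first chosen edge, the counts $1$, $2(N-1)$, and $(N-1)^2$ are right, and $\frac{1}{N^2} + \frac{(N-1)^2}{N^2}\cdot\frac{1}{N-1} = \frac{1}{N}$ closes the recursion. Either version is a complete and correct proof of the equality stated in the lemma; there is no gap to flag.
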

This lemma was proved in~\cite{anshelevich2015blind}.\\

 \begin{theorem}
 \label{perfect-random-bound}
 The uniformly random perfect matching is a 3-approximation to the maximum-weight matching.
\end{theorem}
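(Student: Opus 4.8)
The plan is to combine Lemma~\ref{random-bound} with the metric (bipartite triangle) inequality. By Lemma~\ref{random-bound}, the expected weight of the random matching equals $\frac{1}{N}w(G)$, where $w(G)=\sum_{(x,y)\in E}w(x,y)$ is the sum of all $N^2$ edge weights. Hence proving a $3$-approximation, i.e. $E[w(M)]\ge\frac13 w(OPT(G))$, is \emph{equivalent} to establishing the lower bound $N\cdot w(OPT(G))\le 3\,w(G)$. So the whole task reduces to bounding the total weight of the optimal matching from above by a constant multiple of the total weight of the graph.

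To prove $N\cdot w(OPT(G))\le 3\,w(G)$, I would index the vertices so that $OPT(G)=\{(x_i,y_i):i=1,\dots,N\}$ and picture the edges as entries of an $N\times N$ matrix, with the OPT edges on the diagonal. The key step is to bound each diagonal (OPT) edge via the metric inequality using intermediate points chosen \emph{off} the diagonal. Concretely, for indices $j,k$ the assumed property gives $w(x_i,y_i)\le w(x_i,y_k)+w(x_j,y_i)+w(x_j,y_k)$, and provided $j\ne i$, $k\ne i$ and $j\ne k$, all three edges on the right-hand side are non-diagonal (non-OPT) edges.

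I would then sum this inequality over every admissible pair $(j,k)$ (there are $(N-1)(N-2)$ of them for each fixed $i$) and over all $i$, evaluating the resulting sums by double counting. Writing $W'=w(G)-w(OPT(G))$ for the total weight of the non-diagonal edges, each edge type contributes a clean multiple of $W'$: after also summing over $i$, the two ``single-off-diagonal'' sums $\sum_i\sum w(x_i,y_k)$ and $\sum_i\sum w(x_j,y_i)$ each contribute $(N-2)W'$, while the ``doubly-off-diagonal'' sum $\sum_i\sum_{j\ne i,k\ne i,j\ne k}w(x_j,y_k)$ counts each non-diagonal edge exactly $N-2$ times, again yielding $(N-2)W'$. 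The left-hand side is $(N-1)(N-2)\,w(OPT(G))$, so dividing through by $N-2$ collapses everything to $(N-1)\,w(OPT(G))\le 3W'=3\bigl(w(G)-w(OPT(G))\bigr)$, i.e. $(N+2)\,w(OPT(G))\le 3\,w(G)$, which is even stronger than the required $N\cdot w(OPT(G))\le 3\,w(G)$. The cases $N\le 2$ (where the factor $N-2$ vanishes) are handled directly, since there $E[w(M)]\ge\frac12 w(OPT(G))$.

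The main obstacle—and the reason a naive attempt degenerates—is precisely this choice of intermediate points. If one instead uses an intermediate pair on the diagonal (taking $j=k$, equivalently using a single-intermediate-vertex triangle inequality), then a term $w(x_j,y_j)$ reappears on the right-hand side and, after summation, reintroduces a full copy of $N\cdot w(OPT(G))$; the optimal-matching terms then cancel and one is left only with the vacuous bound $0\le 2\,w(G)$. The crux is therefore to exploit the bipartite structure by insisting $j\ne k$, so that the metric inequality expresses each optimal edge purely in terms of suboptimal edges, after which the symmetric double counting produces the constant $3$ automatically.
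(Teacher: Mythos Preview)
Your proof is correct and follows the same two-step outline as the paper: invoke Lemma~\ref{random-bound} to reduce to the inequality $N\cdot w(OPT(G))\le 3\,w(G)$, then establish that inequality by summing the bipartite triangle inequality and double counting.

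Where you diverge from the paper is in how you carry out the second step, and here you have made things harder than necessary. The paper simply sums $w(x,y)\le w(x,b)+w(a,y)+w(a,b)$ over \emph{all} $(a,b)\in E$ (no restriction that $a\ne x$, $b\ne y$, or $a\ne b$) and then over all $(x,y)\in OPT$; the three right-hand terms each contribute exactly $N\cdot w(G)$, giving $N^2\,w(OPT)\le 3N\,w(G)$ directly. Your stated ``main obstacle'' is a red herring: the degeneracy you describe only occurs if one restricts the sum to \emph{only} the pairs with $j=k$, in which case the $w(x_j,y_j)$ terms accumulate to $N\,w(OPT)$ and cancel. Including those diagonal terms \emph{alongside} all the others is harmless, because they contribute at most an $O(1/N)$ fraction of the total sum. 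So there is no need to insist on $j\ne i$, $k\ne i$, $j\ne k$; dropping those constraints removes the bookkeeping and the separate treatment of $N\le 2$.

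That said, your more careful summation does buy something: it yields the sharper inequality $(N+2)\,w(OPT(G))\le 3\,w(G)$ rather than $N\,w(OPT(G))\le 3\,w(G)$, i.e.\ the random matching is in fact a $\frac{3N}{N+2}$-approximation. This is a genuine (if modest) strengthening of the paper's statement.
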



  \begin{proof}
    Suppose $G=(\mathcal{X}, \mathcal{Y}, E)$ is a complete bipartite graph on the set of nodes $\mathcal{X}, \mathcal{Y} \subseteq \mathcal{N}$ with $|\mathcal{X}| = |\mathcal{Y}| = N$. Let $OPT$ be the optimal perfect matching. Suppose $(x, y)$ is an edge in $OPT$. Then for any edge $(a, b) \in E$ , by triangle inequality,
    \begin{equation*}
    w(x,y) \le w(x, b) + w(a, y) + w(a, b)
    \end{equation*}

    Summing up for all $(a, b) \in E$ ,
    \begin{equation*}
    N^2w(x,y) \le N\sum_{b \in \mathcal{Y}} w(x, b) + N\sum_{a \in \mathcal{X}} w(a, y) + \sum_{(a, b) \in E}w(a, b)
    \end{equation*}

    Summing up for all $(x, y) \in OPT$,
    \begin{equation*}
    N^2w(OPT) \le N\sum_{(a, b) \in E}w(a, b) + N\sum_{(a, b) \in E}w(a, b) + N\sum_{(a, b) \in E}w((a, b) = 3N\sum_{(a, b) \in E}w(a, b)
    \end{equation*}

    Let $M$ be the matching returned by Algorithm~\ref{random}. Then, by Lemma~\ref{random-bound},
    \begin{equation*}
    E[w(M)] = \frac{1}{N}\sum_{(a, b) \in E}w(a, b) \ge \frac{1}{3}w(OPT)
    \end{equation*}
  \end{proof}

\vspace{3mm}
\begin{algorithm}[H]
\caption{Algorithm for Perfect Matching given partial one-sided ordering.}
\label{onesided-rsd-partial}
  Run Algorithm~\ref{onesided-rsd}, stop when $|M| = \alpha N$, then form random matches until all agents are matched. Return $M$.
\end{algorithm}
\vspace{3mm}

\begin{theorem}
\label{onesided-rsd-partial-bound}
Suppose $G=(\mathcal{X}, \mathcal{Y}, E)$ is a complete bipartite graph on the set of nodes $\mathcal{X}, \mathcal{Y}$ with $|\mathcal{X}| = |\mathcal{Y}| = N$. There is a strict preference ordering $P_x$ over the agents in $\mathcal{Y}$ for each agent $x\in\mathcal{X}$. We are only given top $\alpha N$ agents in $P_x$ in order. Then, the expected weight of the perfect matching $M$ returned by Algorithm~\ref{onesided-rsd-partial} is $E[w(M)] \ge \frac{1}{3 - (2 - \sqrt{2})\alpha}w(OPT(G))$, as shown in Figure~\ref{fig:plot_figures}.
\end{theorem}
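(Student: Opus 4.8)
The plan is to interpolate between the two building blocks we already have: Algorithm~\ref{onesided-rsd} (Random Serial Dictatorship), which by Theorem~\ref{onesided-rsd-bound} gives a $(\sqrt{2}+1)$-approximation when run to completion, and Algorithm~\ref{random} (uniformly random matching), which by Theorem~\ref{perfect-random-bound} gives a $3$-approximation. Algorithm~\ref{onesided-rsd-partial} runs the first for $\alpha N$ steps and then finishes with the second. So I would split $w(M) = w(M_1) + w(M_2)$, where $M_1$ is the set of $\alpha N$ edges chosen by the RSD phase and $M_2$ is the random completion on the residual $(1-\alpha)N$-by-$(1-\alpha)N$ complete bipartite graph $G'$, and bound $E[w(M_1)]$ and $E[w(M_2)]$ separately against $w(OPT(G))$.

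\smallskip

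First I would handle the RSD phase. The key observation is that Proposition~\ref{onesided-proposition} (and Lemma~\ref{onesided-weight}) already tells us that a \emph{complete} run of RSD beats $\frac{1}{\sqrt2+1}w(OPT)$; I want the analogous statement for the expected weight accumulated in just the first $\alpha N$ rounds plus the expected weight of $OPT$ on the leftover graph. Unrolling Lemma~\ref{onesided-weight} for $k=\alpha N$ steps gives $RSD(G) = E[w(M_1^{(k)})] + E[RSD(R^{(k)})]$, where $R^{(k)}$ is the random residual graph after $k$ rounds; and the same induction as in Proposition~\ref{onesided-proposition}, stopped early, shows $E[w(M_1^{(k)})] + \frac{1}{\sqrt2+1} E[w(OPT(R^{(k)}))] \ge \frac{1}{\sqrt2+1} w(OPT(G))$. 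In other words, the RSD edges we keep, plus a $(\sqrt2+1)$-discounted copy of the optimum on what remains, already pay for a $(\sqrt2+1)$-discounted copy of the full optimum. I would state this as a short lemma extracted from the proof of Theorem~\ref{onesided-rsd-bound}.

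\smallskip

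Next I would bound the random completion. On the residual graph $G'=R^{(k)}$, Theorem~\ref{perfect-random-bound} gives $E[w(M_2) \mid G'] \ge \frac{1}{3} w(OPT(G'))$ — note $w(OPT(G'))$ is at least the weight of the restriction of $OPT(G)$ to the unmatched vertices, but more importantly it is exactly the quantity appearing in the bound above. Taking expectations over the randomness of the RSD phase and combining: $E[w(M)] = E[w(M_1)] + E[w(M_2)] \ge E[w(M_1)] + \frac{1}{3} E[w(OPT(G'))]$. Now I want to write $\frac{1}{3} = \theta \cdot \frac{1}{\sqrt2+1}$ plus a slack that I can discard, or rather combine the two inequalities as a convex-combination argument: from the RSD lemma, $E[w(M_1)] \ge \frac{1}{\sqrt2+1}w(OPT(G)) - \frac{1}{\sqrt2+1}E[w(OPT(G'))]$, so $E[w(M)] \ge \frac{1}{\sqrt2+1}w(OPT(G)) - \left(\frac{1}{\sqrt2+1} - \frac13\right) E[w(OPT(G'))]$. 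The final step is to upper-bound $E[w(OPT(G'))]$ in terms of $w(OPT(G))$: since $G'$ retains a $(1-\alpha)$ fraction of the vertices on each side and the $RSD$-chosen edges are each at least as heavy as the best remaining edge of their $x$-endpoint, one should get something like $E[w(OPT(G'))] \le (1-\alpha)\,w(OPT(G))$ after using the metric/triangle inequality to control cross terms — essentially the same averaging used in Theorem~\ref{perfect-random-bound}. Plugging this in yields $E[w(M)] \ge \left[\frac{1}{\sqrt2+1} - (1-\alpha)\left(\frac{1}{\sqrt2+1}-\frac13\right)\right] w(OPT(G))$, and simplifying the bracket (with $\frac{1}{\sqrt2+1} = \sqrt2-1$) should give exactly $\frac{1}{3-(2-\sqrt2)\alpha}$.

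\smallskip

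The main obstacle I anticipate is getting a clean, correct bound on $E[w(OPT(G'))]$ relative to $w(OPT(G))$: the residual graph is produced by a random but \emph{preference-biased} process, not a uniformly random set of $(1-\alpha)N$ vertices, so I cannot naively say the optimum shrinks proportionally. I would handle this by \emph{not} trying to relate $OPT(G')$ to $OPT(G)$ directly, but instead rerunning the telescoping/averaging argument at the level of the recursion: either I carry the factor $\frac{\sqrt2+1}{3}$-type slack through the induction on graph size (proving the statement of Theorem~\ref{onesided-rsd-partial-bound} by strong induction on $N$, with the inductive hypothesis applied to each $R(G,x)$ with parameter $\alpha$ rescaled appropriately), or I revisit Lemma~\ref{onesided-structural} and observe that its Case 1 / Case 2 dichotomy already bounds the per-step loss, so that stopping after $\alpha N$ steps and switching to the $3$-approximation costs only the difference between the two per-step rates on the remaining $(1-\alpha)$ fraction of rounds. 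I expect the induction-on-$N$ route to be the cleanest and to be what makes the exact constant $3-(2-\sqrt2)\alpha$ fall out, with the linear-in-$\alpha$ form reflecting precisely that a $(1-\alpha)$-fraction of the rounds run at the worse rate $3$ and an $\alpha$-fraction at the better rate $\sqrt2+1$.
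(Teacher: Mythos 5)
Your high-level plan (bound the RSD prefix and the random completion separately, then interpolate) is the right shape, but the key intermediate lemma you propose is false, and it fails for exactly the reason the paper spends Figure~\ref{fig:rsd-example} explaining. You claim that stopping RSD after $k$ rounds gives $E[w(M_1^{(k)})] + \frac{1}{\sqrt2+1}E[w(OPT(R^{(k)}))] \ge \frac{1}{\sqrt2+1}w(OPT(G))$, i.e.\ that each RSD round decreases the optimum of the residual graph by at most $(\sqrt2+1)$ times the expected weight of the edge it collects. In the paper's example (one edge $(x_1,y_1)$ of weight $3$, all other edges of weight $1$, every $x$ ranking $y_1$ first), the first round of RSD collects expected weight $\frac{N+2}{N}\to 1$ while $E[w(OPT(R^{(1)}))] = w(OPT(G))-3$; your inequality with $k=1$ would force $3\le(\sqrt2+1)\frac{N+2}{N}\approx 2.41$, a contradiction. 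This is precisely why Lemma~\ref{onesided-structural} needs its Case~2: when the per-step loss exceeds the $(\sqrt2+1)$ rate, the paper instead bounds $OPT$ against the \emph{minimum} matching, and the residual term that survives the induction is $3E[Rand(L(G,\ell))]$ (Lemma~\ref{onesided-partial-structure2}), not a $\frac{1}{\sqrt2+1}$-discounted $OPT$ of the residual. Your fallback remark that the Case~1/Case~2 dichotomy ``already bounds the per-step loss'' is gesturing at this, but Case~2 is not a per-step bound and cannot be telescoped the way you describe.

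The second gap is the step $E[w(OPT(G'))]\le(1-\alpha)\,w(OPT(G))$, which you flag yourself as unproven; the paper avoids needing any such statement. Its substitute is Lemma~\ref{alg_avg}: each of the first $\alpha N$ RSD edges has expected weight at least the expected \emph{average} edge weight of the residual graph $G'$, hence $\sum_{i\le\alpha N}Alg_i(G)\ge \frac{\alpha}{1-\alpha}E[Rand(G')]$. Feeding this into $w(OPT(G))\le(\sqrt2+1)\sum_i Alg_i(G)+3E[Rand(G')]$ is what merges the two rates into the single factor $3-(2-\sqrt2)\alpha$. Note also that even granting both of your unproven steps, your final expression $\bigl[\frac{1}{\sqrt2+1}-(1-\alpha)(\frac{1}{\sqrt2+1}-\frac13)\bigr]$ is a linear interpolation of approximation \emph{rates} and does not equal $\frac{1}{3-(2-\sqrt2)\alpha}$ except at $\alpha\in\{0,1\}$ (by convexity of $x\mapsto 1/x$ it is strictly larger in between), so the claimed simplification ``to exactly'' the theorem's constant would not go through.
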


\begin{proof}
  We use the same notation as in the proof of Theorem~\ref{onesided-rsd-bound}. We would like to apply our main technical result (Lemma~\ref{onesided-structural}) to analyze this algorithm. Define $Alg_i(S)$ as the expected weight of chosen edge in round $i$ of RSD on any subgraph $S$. For any bipartite graph $S$, let $Rand(S)$ denote the expected weight of the perfect matching returned by Algorithm~\ref{random}, and $Avg(S)$ denote the average weight of edges in $S$.

We begin by bounding $w(OPT(G))$ by the sum of expected weights of chosen edges in RSD, and the weight of the remaining subgraph.
\begin{lemma}
\label{onesided-partial-structure2}
  Let $L(G,\ell)$ be the subgraph of $G$ after $\ell$ rounds of RSD, which has $N-\ell$ nodes both in $\mathcal{X}$ and $\mathcal{Y}$. Note that $L(G,\ell)$ is a random variable. Then we have that:
  \begin{equation*}
  w(OPT(G)) \le (\sqrt{2}+1)\sum_{i=1}^{\ell}Alg_i(G) + 3E[Rand(L(G,\ell))]
  \end{equation*}
\end{lemma}

\begin{proof}
We prove this by induction on $\ell$. For the Base Case, when $\ell=0$, then this simply reduces to Theorem \ref{random-bound}. Now assume by the inductive hypothesis that, $\forall x \in \mathcal{X}$,
  \begin{equation*}
    w(OPT(R(G,x))) \le (\sqrt{2}+1)\sum_{i=1}^{\ell - 1}Alg_i(R(G,x)) + 3E[Rand(L(R(G,x),\ell-1))]
  \end{equation*}

  If Case 1 in Lemma~\ref{onesided-structural} holds for $G$, then
  \begin{align*}
    w(OPT(G)) &\le \frac{\sqrt{2}+1}{|\mathcal{X}|}\sum_{x \in \mathcal{X}}w(\lambda(G,x)) + \frac{1}{|\mathcal{X}|} \sum_{x \in \mathcal{X}} w(OPT(R(G, x)))\\
    &\le  (\sqrt{2}+1)\sum_{i=1}^{\ell}Alg_i(G) + 3E[Rand(L(G,\ell))]
  \end{align*}

  The last inequality is simply because of the inductive hypothesis, and the fact that $E[Rand(L(G,\ell))] = \frac{1}{|\mathcal{X}|}\sum_{x\in \mathcal{X}}E[Rand(L(R(G,x),\ell-1))]$. If instead Case 2 in  Lemma~\ref{onesided-structural} holds for $G$, then
  \begin{equation*}
    w(OPT(G)) \le (\sqrt{2}+1)w(Min(G))
  \end{equation*}
  Let's consider a perfect matching on $G$ generated by running RSD for $\ell$ rounds, and then obtaining the minimum weight matching for the remaining subgraph. By the definition of $Min(G)$, the weight of the matching described above is no less than $Min(G)$:
  \begin{align*}
    w(OPT(G)) &\le (\sqrt{2}+1)w(Min(G))\\
              &\le (\sqrt{2}+1)\sum_{i=1}^{\ell}Alg_i(G) + (\sqrt{2}+1) E[w(Min(L(G,\ell)))]\\
              &\le (\sqrt{2}+1)\sum_{i=1}^{\ell}Alg_i(G) + (\sqrt{2}+1) E[Rand(L(G,\ell))]\\
              &\le (\sqrt{2}+1)\sum_{i=1}^{\ell}Alg_i(G) + 3E[Rand(L(G,\ell))]
  \end{align*}
\end{proof}

To finish the proof of the theorem, we need to be able to compare $Rand(L(G,\ell))$ and $Alg_i$. After all, if the random part of our matching is much larger in weight than the RSD part, then the random part will dominate, resulting in only a 3 approximation. Fortunately, it is not hard to see the following lemma. 
Let $G' = L(G,\alpha N)$ be a random variable representing the graph obtained by running RSD on $G$ for $\alpha N$ rounds, which we can always do if we are given the top $\alpha N$ preferences of every agent.

  \begin{lemma}
  \label{alg_avg}
    $\forall i \le \alpha N$, $Alg_i(G)$ is heavier than the expected average edge weight in $G'$, i.e., $Alg_i(G) \ge E[Avg(G']$.
  \end{lemma}

  \begin{proof}
  First notice that $Alg_1(G) \ge Alg_2(G)$. This is true because:
  \begin{align*}
  Alg_2(G) &= \frac{1}{|\mathcal{X}|}\sum_{x\in\mathcal{X}}\frac{1}{|\mathcal{X}|-1}\sum_{y\in\mathcal{X}-x}w(\lambda(R(G,x),y))\\
           &\le \frac{1}{|\mathcal{X}|}\sum_{x\in\mathcal{X}}\frac{1}{|\mathcal{X}|-1}\sum_{y\in\mathcal{X}-x}w(\lambda(G,y))\\
           &= \frac{|\mathcal{X}|-1}{|\mathcal{X}|(|\mathcal{X}|-1)}\sum_{y\in\mathcal{X}}w(\lambda(G,y))\\
           &= Alg_1(G)
  \end{align*}
  The inequality above is simply because the best edge leaving $y$ in a smaller graph $R(G,x)$ is at most the best edge leaving it in a larger graph $G$. By the same argument, we know that $Alg_i(G)\geq Alg_{i+1}(G)$ for all $i$.

  Now consider an arbitrary complete graph $S=(\mathcal{X}',\mathcal{Y}',E')$ with $|\mathcal{X}'|=|\mathcal{Y}'|$. One way to think of $Avg(S)$ is as an expected value of the following randomized algorithm: take a node $x$ in $\mathcal{X}'$ uniformly at random, and then take a random edge leaving that node, and return its weight. The expected value returned by this algorithm is exactly the expected weight of an edge in $S$ taken uniformly at random, i.e., exactly $Avg(S)$. Compare this algorithm with the performance of RSD; RSD does exactly the same thing in the first round, but chooses the best edge coming out of $x$ instead of a random edge. Therefore, the first round of RSD on any graph always performs better than the average edge weight. In particular, this is true for every instantiation of the graph $G'$, and thus $Alg_{\alpha N +1}(G)\geq E[Avg(G')]$. This concludes the proof.
  \end{proof}

  Finally, let's finish the proof of Theorem \ref{onesided-rsd-partial-bound}. By Lemma~\ref{onesided-partial-structure2},

  \begin{align*}
    w(OPT(G)) &\le (\sqrt{2}+1)\sum_{i=1}^{\alpha N}Alg_i(G) + 3E[Rand(G')]\\
    &= (\sqrt{2}+1)\sum_{i=1}^{\alpha N}Alg_i(G) + 3(1 - \alpha) N \times E[Avg(G')],
  \end{align*}

  By Lemma~\ref{alg_avg},

  \begin{equation*}
     \sum_{i=1}^{\alpha N}Alg_i(G) \ge  \alpha N \times E[Avg(G')],
  \end{equation*}

and thus,

  \begin{align*}
    w(OPT(G)) &\le (3 - (2 - \sqrt{2})\alpha)(\sum_{i=1}^{\alpha N}Alg_i(G) + (1-\alpha)N \times E[Avg(G')])\\
              &= (3 - (2 - \sqrt{2})\alpha)(\sum_{i=1}^{\alpha N}Alg_i(G) + E[Rand(G')])
  \end{align*}

  Note that $\sum_{i=1}^{\alpha N}Alg_i(G) + E[Rand(G')]$ is the expected weight of $M$, which completes the proof:
  \begin{equation*}
    w(OPT(G)) \le (3 - (2 - \sqrt{2})\alpha)E[w(M)].
  \end{equation*}
\end{proof}

\section{Two-sided Ordinal Preferences}
\label{section-twosided}
For two-sided preferences, we give separate algorithms for the cases when $\alpha\geq \frac{1}{2}$ and when $\alpha\leq\frac{1}{2}$, as these require somewhat different techniques. 


\vskip 5pt\noindent{$\bm{\alpha\geq\frac{1}{2}}$~~} While for the case when $\alpha < \frac{1}{2}$ new techniques are necessary to obtain a good approximation, the approach for the case when $\alpha\geq \frac{1}{2}$ is essentially the same as the one used in \cite{anshelevich2015blind}. 
We adopt this approach to deal with bipartite graphs and with partial preferences, giving us a 1.8-approximation 
for $\alpha=1$. To do this, we re-state the definition of Undominated Edges from \cite{anshelevich2015blind}, and a standard greedy algorithm for forming a matching of size $k$.

%
%
%

\begin{definition}
  (Undominated Edges) Given a set $E$ of edges, $(x, y) \in E$ is said to be an undominated edge if for all $(x, a)$ and $(y, b)$ in $E$, $w(x,y) \ge w(x,a)$ and $w(x,y) \ge w(y,b)$.
\end{definition}

Note that an undominated edge must always exist: either there are two nodes $x$ and $y$ such that they are each other's top preferences (and so $(x,y)$ is undominated), or there is a cycle $x_1,x_2,\ldots$ in which $x_{i+1}$ is the top preference of $x_i$, in which case all edges in the cycle must be the same weight, and thus all edges in the cycle are undominated. This also gives us an algorithm for determining if an edge $(x,y)$ is undominated: either $x$ and $y$ prefer each other over all other agents, or it is part of such a cycle of top preferences.

\begin{lemma}
\label{undominated-weight-bound}
  Given an edge set $E$ of a complete bipartite graph $G = (\mathcal{X}, \mathcal{Y}, E)$, the weight of any undominated edge is at least one third as much as the weight of any other edge in $E$, i.e., if $e=(x, y)$ is an undominated edge in $E$, that $x \in \mathcal{X}$,  $y \in \mathcal{Y}$, then for any $(a, b) \in E$,  $a \in \mathcal{X}$,  $b \in \mathcal{Y}$, $w(x, y) \ge \frac{1}{3}w(a,b)$.\\
\end{lemma}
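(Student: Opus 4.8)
The plan is to derive the bound directly from the metric (triangle) inequality stated in Section~\ref{section-model}, namely $w(x_1,y_1) \le w(x_1,y_2) + w(x_2,y_1) + w(x_2,y_2)$ for all $x_1,x_2\in\mathcal{X}$ and $y_1,y_2\in\mathcal{Y}$. Fix an undominated edge $e=(x,y)$ with $x\in\mathcal{X}$, $y\in\mathcal{Y}$, and an arbitrary edge $(a,b)\in E$ with $a\in\mathcal{X}$, $b\in\mathcal{Y}$. Instantiate the metric inequality with $x_1=a$, $y_1=b$, $x_2=x$, $y_2=y$, which gives
\[
w(a,b) \le w(a,y) + w(x,b) + w(x,y).
\]

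Next I would invoke the definition of an undominated edge to control the first two terms. Since $G$ is complete bipartite, both $(a,y)$ and $(x,b)$ lie in $E$; the edge $(a,y)$ is incident to $y$ and the edge $(x,b)$ is incident to $x$. Hence, by the undominated property of $(x,y)$, we have $w(a,y)\le w(x,y)$ and $w(x,b)\le w(x,y)$. Substituting into the displayed inequality yields $w(a,b)\le 3\,w(x,y)$, i.e. $w(x,y)\ge \tfrac13 w(a,b)$, which is exactly the claim.

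I expect no serious obstacle here; the only thing to check is the degenerate cases, where the triangle inequality collapses but the bound is still valid. If $a=x$, then $w(a,b)=w(x,b)\le w(x,y)\le 3w(x,y)$ directly; symmetrically if $b=y$; and if $(a,b)=(x,y)$ the statement is trivial. These remarks mirror the ``when $y=b$ the inequality still holds'' style used elsewhere in the paper, and I would include a one-line parenthetical to that effect rather than a separate case analysis. The proof is therefore essentially a single application of the triangle inequality combined with the two inequalities furnished by undominatedness.
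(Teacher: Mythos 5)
Your proof is correct and is essentially identical to the paper's: one application of the four-point triangle inequality $w(a,b)\le w(a,y)+w(x,b)+w(x,y)$, followed by bounding $w(a,y)$ and $w(x,b)$ by $w(x,y)$ via the undominated property. The extra remark about degenerate cases ($a=x$ or $b=y$) is harmless but not needed, since the metric inequality as stated in Section~\ref{section-model} already covers coinciding nodes.
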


\begin{proof}
  Since $e=(x, y)$ is an undominated edge, $w(x, y) \ge w(x, b)$ and $w(x, y) \ge w(a, y)$. By triangle inequality, we know that $w(a, b) \le w(x,y) + w(x, b) + w(a, y) \le 3w(x, y)$.
\end{proof}


\vspace{3mm}
\begin{algorithm}[htb]
\caption{Greedy Algorithm for Max $k$-Matching of two-sided ordering.}
\label{twosided-greedy}
  Initialize $M =  \emptyset $, $E$ is the valid set of edges initialized to the complete bipartite graph $G$ \;
   \While {$E \ne \emptyset$} {
     Pick an undominated edge $e=(x,y)$ from E and add it to $M$  \;
     Remove $x$, $y$, and all edges containing $x$ or $y$ from $E$ \;
     \If { $|M| = k$ }  {
     	break \;
      }
    }
  \textbf{Final Output:} Return $M$.
\end{algorithm}
\vspace{3mm}

Before stating the full algorithm for the case when $\alpha\geq\frac{1}{2}$, we mention two lemmas which will be useful to establish its approximation ratio. These lemmas are essentially the same as the similar ones from ~\cite{anshelevich2015blind}, except that we must adjust all the factors to deal with bipartite graphs, while ~\cite{anshelevich2015blind} considered only non-bipartite graphs. The basic analysis techniques remain the same, however, and we only provide proofs of these lemmas for completeness.

\begin{lemma}
\label{twosided-greedy-k}
  Suppose $G=(\mathcal{X}, \mathcal{Y}, E)$ is a complete bipartite graph on the set of nodes $\mathcal{X}, \mathcal{Y}$ with $|\mathcal{X}| = |\mathcal{Y}| = N$. Given $k = \gamma N$, the performance of the greedy $k$-matching returned by Algorithm~\ref{twosided-greedy} with respect to the optimal perfect matching OPT is given by $\frac{3-2\gamma}{\gamma}$.
\end{lemma}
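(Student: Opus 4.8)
The plan is to adapt the charging argument used for non-bipartite complete graphs in \cite{anshelevich2015blind}, keeping the same skeleton but replacing its factor-$2$ bounds by the factor-$3$ bound that Lemma~\ref{undominated-weight-bound} gives in the bipartite case. Write $M_k=\{e_1,\dots,e_k\}$ for the partial matching output by Algorithm~\ref{twosided-greedy}, with $e_i=(x_i,y_i)$ the edge chosen in round $i$, and let $G_i$ denote the complete bipartite graph that remains after round $i$ (so $G_0=G$, and $G_k$ has $(1-\gamma)N$ vertices on each side). I would prove the claim through two separate estimates: (i) $w(OPT(G))\le 2\,w(M_k)+w(OPT(G_k))$, charging the part of $OPT$ lost during the greedy rounds to the greedy edges themselves; and (ii) $w(OPT(G_k))\le \tfrac{3(1-\gamma)}{\gamma}\,w(M_k)$, controlling the leftover graph. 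Chaining (ii) into (i) eliminates $w(OPT(G_k))$ and leaves a bound of the required shape. The main obstacle — and essentially the only place real care is needed — is pushing these two crude estimates down to the exact constant $\tfrac{3-2\gamma}{\gamma}$: one must track which $OPT$-edges are actually destroyed in the first $k$ rounds (at most $2k$ of them, but typically fewer) versus which survive into $G_k$, and re-credit the ``repair'' edges introduced in step (i) against the leftover term, so that the two bounds overlap rather than simply add.

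For (i): in round $i$ the chosen edge $e_i=(x_i,y_i)$ is undominated in $G_{i-1}$, so every edge of $G_{i-1}$ incident to $x_i$ or to $y_i$ has weight at most $w(e_i)$. Let $(x_i,c_i)$ and $(a_i,y_i)$ be the two edges of a maximum-weight perfect matching of $G_{i-1}$ incident to $x_i$ and to $y_i$. Deleting these two edges and inserting $(a_i,c_i)$ yields a perfect matching of $G_i$, whence $w(OPT(G_{i-1}))\le w(OPT(G_i))+w(x_i,c_i)+w(a_i,y_i)\le w(OPT(G_i))+2\,w(e_i)$; the degenerate cases ($e_i$ itself lies in the matching, or $c_i$ or $a_i$ was removed in an earlier round, or there is a cycle of mutual top-preferences) are handled exactly as in the one-sided analysis of Lemma~\ref{onesided-structural} and only improve the inequality. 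Telescoping over $i=1,\dots,k$ gives estimate (i).

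For (ii): applying Lemma~\ref{undominated-weight-bound} inside $G_{i-1}$ shows that every edge of $G_{i-1}$, and in particular every edge of the subgraph $G_k\subseteq G_{i-1}$, has weight at most $3\,w(e_i)$. Since $OPT(G_k)$ consists of $(1-\gamma)N$ edges, $w(OPT(G_k))\le 3(1-\gamma)N\cdot w(e_i)$ for each $i\le k$; averaging over $i\in\{1,\dots,k\}$ and using $k=\gamma N$ together with $\sum_{i\le k}w(e_i)=w(M_k)$ gives estimate (ii). Substituting (ii) into (i) already yields $w(OPT(G))\le\big(2+\tfrac{3(1-\gamma)}{\gamma}\big)w(M_k)$, and the refined accounting described above — exploiting the slack in the ``at most $2k$ destroyed edges'' count and reabsorbing the repair edges $(a_i,c_i)$ into the leftover term — sharpens the coefficient to $\tfrac{3-2\gamma}{\gamma}$, i.e.\ $w(M_k)\ge\tfrac{\gamma}{3-2\gamma}\,w(OPT(G))$, which is the claim. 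Everything else is a routine bipartite reworking of the corresponding step in \cite{anshelevich2015blind}.
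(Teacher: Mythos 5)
Your estimates (i) and (ii) are both correct as stated, but together they give only $w(OPT(G)) \le \bigl(2 + \tfrac{3(1-\gamma)}{\gamma}\bigr) w(M_k) = \tfrac{3-\gamma}{\gamma}\, w(M_k)$, and the step that would close the gap to $\tfrac{3-2\gamma}{\gamma}$ --- your ``refined accounting'' --- is exactly where the content of the lemma lies, and you have not supplied it. The shortfall is not a lower-order term: it is an additive $1$ in the approximation ratio, i.e.\ a full $w(M_k)$ of charge (your scheme spends $2k + 3(N-k) = 3N-k$ units of greedy-edge weight against the $3N-2k$ units the target ratio allows). Neither source of slack you point to is guaranteed to recover this. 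In the worst case every round really does delete two edges of the current optimum, and the repair edge $(a_i,c_i)$ can have weight arbitrarily close to $0$ under the paper's relaxed triangle inequality (which only upper-bounds an edge by a sum of three others), so ``reabsorbing'' it buys nothing. More fundamentally, your telescoping passes through the re-optimized matchings $OPT(G_{i-1})$, so the ``at most $2k$ destroyed edges'' are not $2k$ identifiable edges of $OPT(G)$, and there is no clean way to play their multiplicity off against the $(1-\gamma)N$ edges of $OPT(G_k)$: the two estimates are bounds on different objects and can only be added.

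The paper avoids this by charging $OPT(G)$ directly rather than telescoping. It splits the $N$ optimal edges into Type I (at least one endpoint matched by greedy; there are $m_1 \ge k$ of these, since only $N-k$ nodes per side remain unmatched) and Type II ($m_2 = N - m_1 \le N-k$). Each Type I edge is charged at ratio $1$ to the first greedy edge that removed one of its endpoints --- that greedy edge was undominated while the Type I edge was still present and shares an endpoint with it, hence is at least as heavy --- and only the Type II edges pay the factor $3$ of Lemma~\ref{undominated-weight-bound}; a slot-transfer argument (charges may always be moved to heavier greedy edges) then lets the $m_1 + 3m_2 \le 3N-2k$ units be absorbed by the $k$ greedy edges with at most $\tfrac{3N-2k}{k}$ each. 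The key saving --- that at least $k$ of the $N$ optimal edges are charged at ratio $1$ rather than $3$ --- has no counterpart in your decomposition. To complete your proof you would need to replace (i)+(ii) by this direct Type I/Type II charging (or give some other concrete argument for the sharpened coefficient); as written, your argument establishes only a $\tfrac{3-\gamma}{\gamma}$ bound.
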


\begin{proof}
  The analysis here is essentially identical to that of a similar lemma in ~\cite{anshelevich2015blind}, except that Lemma~\ref{undominated-weight-bound} gives a ratio of 3 instead of 2 between any edge and an undominated edge for bipartite graphs. We include the whole analysis of the framework for completeness.

  Let $M$ be the greedy $k$-matching, and $M^*$ be the optimal perfect matching. We show the claim by charging every edge in $M^*$ to one or more edges in the greedy matching $M$. Consider any edge $e^* = (a, b)$ in $M^*$, the edge must belong to one of the following two types.
  \begin{enumerate}
    \item (Type I) Some edges consisting of $a$ or $b$ (both $a$ and $b$) are present in $M$.
    \item (Type II) No edge in $M$ has $a$ or $b$ as an endpoint.
  \end{enumerate}
  Suppose that $M^*$ contains $m_1$ Type I edges, and $m_2$ Type II edges. We know that $m_1 + m_2 = N$. Let $T \subset M$ denote the heaviest $\frac{m_1}{2}$ edges in $M$. Initialize $U$ as all the edges in $M$. We describe our charging algorithm in three phases.

  (First Phase) We can charge all Type I edges in $M^*$ to the edges in  $T$, so that $\sum_{e \in T}s_e w_e \ge \sum_{e \in TypeI (M^*)}w_e$, $s_e \le 2$. We charge the edges as follows: Repeat until $U$ contains no Type I edge: pick a type I edge $e^*=(a,b) $ from $U$. Suppose that $e = (a,c)$ is the first edge containing either a or b that was added to $M$, Since $w_e \ge w_e^*$, charge $e^*$ to $e$, increase $s_e$ by one and remove $e^*$ from U.
   In the end, all the edges that are charged in $M$ have $s_e \le 2$, and $\sum_{e}s_e = m_1$ . We can transfer the slots to the heaviest $\frac{m_1}{2}$ edges in $M$, each has $s_e \le 2$. Keep transfering the slots to the heaviest $\frac{m_1}{\mu}$ edges in $M$, so that each edge has $s_e \le \mu$.

  (Second Phase) Repeat until  $s_e = \mu$ for all $e \in M \backslash T$ or until $U$ is empty: pick any arbitrary edge $e^*$ from $U$ and the smallest edge $e \in M \backslash T$ such that $s_e < \mu$ By Lemma~\ref{undominated-weight-bound} $w_e* \le 3w_e$, charge $e*$ using three slots of $e$, transfer slots to the heaviest edges $e \in M \backslash T$ such that $s_e < \mu$. So   $e^*$ is charged  by three slot from edges in $ M \backslash T$.

  At the end of the second phase, $|U| = \max(0, m_2 - (k - \frac{m_1}{\mu} ) \times \frac{\mu}{3})$.\\

  (Third Phase) Repeat until $U$ is empty: pick any arbitrary edge $e^*$ from $U$. Since $w_e^* \le 3w_e$ for all $e \in M$, charge $e^*$ uniformly to all edges in $M$, i.e., increase $s_e$ by $\frac{3}{k}$ for every $e \in M$ and remove $e^*$ from $U$. \\

  At the end of the third phase, for every $e \in M$,
  \begin{equation*}
  s_e \le \mu + \frac{3}{k} \max(0, m_2 -  (k - \frac{m_1}{\mu} ) \times \frac{\mu}{3})
  \end{equation*}

  Because $m_1 + m_2 = N$,
  \begin{equation*}
  s_e \le \max(\mu, \frac{2m_2 +N}{k})
  \end{equation*}

  Type II edges don't share nodes with any of the $k$ edges in $M$, so $m_2 + k \le N$,
  \begin{equation*}
  s_e \le \max(\mu, \frac{3N - 2k}{k})
  \end{equation*}
  \begin{equation*}
  s_e \le  max(\mu, \frac{3-2\gamma}{\gamma})
  \end{equation*}

  Let $\mu = \frac{3-2\gamma}{\gamma}$,
  \begin{equation*}
  s_e \le \frac{3-2\gamma}{\gamma}
  \end{equation*}
\end{proof}

\begin{lemma}
\label{upper-bound}

  Let $G_T = (\mathcal{X}_T, \mathcal{Y}_T, E_T )$ be a complete bipartite subgraph on the set of nodes $\mathcal{X}_T \subseteq \mathcal{X}$, $\mathcal{Y}_T \subseteq \mathcal{Y}$, with $|\mathcal{X}_T| = |\mathcal{Y}_T| = n$, and let $M$ be any perfect matching on $G  = (\mathcal{X}, \mathcal{Y}, E )$. Then, the following is an upper bound on the weight of $M$,
  \begin{equation*}
  nw(M) \le (2 + \frac{N}{n}) \sum_{\substack{x \in \mathcal{X}_T \\ y \in  \mathcal{Y}_T}} w(x, y) +  \sum_{\substack{x \in \mathcal{X}_T \\ y \in  \mathcal{Y} \backslash \mathcal{Y}_T}} w(x, y)  + \sum_{\substack{x \in  \mathcal{X} \backslash \mathcal{X}_T \\ y \in  \mathcal{Y}_T}} w(x, y)
  \end{equation*}

\end{lemma}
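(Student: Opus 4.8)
The plan is to mimic the averaging-via-triangle-inequality argument that established the $3$-approximation for the uniformly random matching (Theorem~\ref{perfect-random-bound}), but now averaging only over the ``test set'' $\mathcal{X}_T \times \mathcal{Y}_T$ rather than over all of $E$. Fix any edge $(x,y) \in M$. For every pair $x' \in \mathcal{X}_T$, $y' \in \mathcal{Y}_T$, the metric (triangle) inequality assumed in Section~\ref{section-model} gives
$w(x,y) \le w(x,y') + w(x',y) + w(x',y')$.
Summing this over all $n^2$ such pairs yields
$n^2 w(x,y) \le n \sum_{y' \in \mathcal{Y}_T} w(x,y') + n \sum_{x' \in \mathcal{X}_T} w(x',y) + \sum_{x' \in \mathcal{X}_T,\, y' \in \mathcal{Y}_T} w(x',y')$.

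Next I would sum this inequality over the $N$ edges $(x,y) \in M$. Since $M$ is a perfect matching, as $(x,y)$ ranges over $M$ the first coordinate $x$ ranges over all of $\mathcal{X}$ and the second coordinate $y$ ranges over all of $\mathcal{Y}$, each exactly once. Hence the left side becomes $n^2 w(M)$; the first right-hand term becomes $n \sum_{x \in \mathcal{X},\, y' \in \mathcal{Y}_T} w(x,y')$; the second becomes $n \sum_{x' \in \mathcal{X}_T,\, y \in \mathcal{Y}} w(x',y)$; and the third becomes $N \sum_{x' \in \mathcal{X}_T,\, y' \in \mathcal{Y}_T} w(x',y')$ because the inner double-sum does not depend on $(x,y)$. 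The only bookkeeping step is to split $\sum_{x \in \mathcal{X}} = \sum_{x \in \mathcal{X}_T} + \sum_{x \in \mathcal{X}\setminus\mathcal{X}_T}$ in the first term and $\sum_{y \in \mathcal{Y}} = \sum_{y \in \mathcal{Y}_T} + \sum_{y \in \mathcal{Y}\setminus\mathcal{Y}_T}$ in the second, and then collect all contributions landing on $\mathcal{X}_T \times \mathcal{Y}_T$: these are $n$ (from term one), $n$ (from term two), and $N$ (from term three), for a total coefficient of $2n+N$ on $\sum_{x \in \mathcal{X}_T, y \in \mathcal{Y}_T} w(x,y)$, while the cross blocks $\mathcal{X}_T \times (\mathcal{Y}\setminus\mathcal{Y}_T)$ and $(\mathcal{X}\setminus\mathcal{X}_T) \times \mathcal{Y}_T$ each pick up coefficient $n$.

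This gives $n^2 w(M) \le (2n+N)\sum_{x \in \mathcal{X}_T, y \in \mathcal{Y}_T} w(x,y) + n\sum_{x \in \mathcal{X}_T, y \in \mathcal{Y}\setminus\mathcal{Y}_T} w(x,y) + n\sum_{x \in \mathcal{X}\setminus\mathcal{X}_T, y \in \mathcal{Y}_T} w(x,y)$, and dividing through by $n$ yields exactly the claimed bound. There is no real obstacle here — the argument is a direct calculation. The only point requiring a little care is making sure the summation over $M$ is correctly converted into summations over $\mathcal{X}$ and $\mathcal{Y}$ (using that $M$ is a \emph{perfect} matching on the full graph $G$, not on $G_T$), and that the per-block coefficients are added up correctly; I would present the splitting of the sums explicitly so the coefficient $2n+N$ is transparent.
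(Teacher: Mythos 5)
Your proposal is correct and follows essentially the same route as the paper's proof: fix an edge of $M$, apply the four-point metric inequality against every pair in $\mathcal{X}_T \times \mathcal{Y}_T$, sum over the $n^2$ pairs and then over the $N$ edges of the perfect matching, and split the resulting sums over $\mathcal{X} = \mathcal{X}_T \cup (\mathcal{X}\setminus\mathcal{X}_T)$ and $\mathcal{Y} = \mathcal{Y}_T \cup (\mathcal{Y}\setminus\mathcal{Y}_T)$ to collect the coefficient $2n+N$ on the $\mathcal{X}_T\times\mathcal{Y}_T$ block. Your bookkeeping of the coefficients matches the paper's exactly, so there is nothing to add.
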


\begin{proof}

  For $e = (x, y) \in M, e'=(a, b) \in E_T$, by triangle inequality,
  \begin{equation*}
  w(a, y) + w(a, b) + w(x, b) \ge w(x, y)
\end{equation*}

  Sum up for all $(a, b) \in E_T$,
  \begin{equation*}
  n \sum_{a \in \mathcal{X}_T} w(a, y) + \sum_{\substack{a \in \mathcal{X}_T \\ b \in  \mathcal{Y}_T}} w(a, b) + n \sum_{b \in \mathcal{Y}_T} w(x, b) \ge n^2 w(x, y)
  \end{equation*}

  Sum up for all $(x, y) \in M$,
  \begin{equation*}
  n \sum_{\substack{a \in \mathcal{X}_T \\ y \in \mathcal{Y} }} w(a, y) + N \sum_{\substack{a \in \mathcal{X}_T \\ b \in  \mathcal{Y}_T}} w(a, b) + n \sum_{\substack{b \in \mathcal{Y}_T \\ x \in \mathcal{X} }} w(x, b) \ge n^2 w(M)
  \end{equation*}

  Because $\mathcal{Y} =  \mathcal{Y}_T \cup  \{ \mathcal{Y} \backslash \mathcal{Y}_T \} $, and $\mathcal{X} =  \mathcal{X}_T \cup  \{ \mathcal{X} \backslash \mathcal{X}_T\} $,
  \begin{equation*}
  n (\sum_{\substack{a \in \mathcal{X}_T \\ y \in \mathcal{Y}_T }} w(a, y)  + \sum_{\substack{a \in \mathcal{X}_T \\ y \in \mathcal{Y} \backslash \mathcal{Y}_T  }} w(a, y) ) + N \sum_{\substack{a \in \mathcal{X}_T \\ b \in  \mathcal{Y}_T}} w(a, b) + n (\sum_{\substack{b \in \mathcal{Y}_T \\ x \in \mathcal{X}_T }} w(x, b) + \sum_{\substack{b \in \mathcal{Y}_T \\ x \in \mathcal{X} \backslash \mathcal{X}_T }} w(x, b) ) \ge n^2 w(M)
  \end{equation*}

  Replace $a$ with $x$, and $b$ with $y$,
  \begin{equation*}
    2n \sum_{\substack{x \in \mathcal{X}_T \\ y \in  \mathcal{Y}_T}} w(x, y) +  n \sum_{\substack{x \in \mathcal{X}_T \\ y \in  \mathcal{Y} \backslash \mathcal{Y}_T}} w(x, y)  + n \sum_{\substack{x \in  \mathcal{X} \backslash \mathcal{X}_T \\ y \in  \mathcal{Y}_T}} w(x, y) + N \sum_{\substack{x \in \mathcal{X}_T \\ y \in  \mathcal{Y}_T}} w(x, y) \ge n^2 w(M)
  \end{equation*}
  \begin{equation*}
  nw(M) \le (2 + \frac{N}{n}) \sum_{\substack{x \in \mathcal{X}_T \\ y \in  \mathcal{Y}_T}} w(x, y) +  \sum_{\substack{x \in \mathcal{X}_T \\ y \in  \mathcal{Y} \backslash \mathcal{Y}_T}} w(x, y)  + \sum_{\substack{x \in  \mathcal{X} \backslash \mathcal{X}_T \\ y \in  \mathcal{Y}_T}} w(x, y)
  \end{equation*}

\end{proof}

%

We can now state the algorithm for $\alpha \ge \frac{1}{2}$. The algorithm is a mix of greedy and random algorithms: for graph $G=(\mathcal{X}, \mathcal{Y}, E)$, given top $\alpha N$ of $P(\mathcal{X})$ and top $\alpha N$ of $P(\mathcal{Y})$, run Algorithm~\ref{twosided-greedy} on $k = \alpha N$, to obtain the matching $M_0$. This is possible using the preference we are given. One method we could do at this point is to form a random matching on the rest of the agents. However, this will not form a good approximation, as there are examples when all the high-weight edges are between nodes matched in $M_0$ and nodes which are unmatched. Another method is to randomly choose some matched nodes from $M_0$, make then unmatched, and form a random bipartite matching between all the agents which were not matched in $M_0$, and the nodes which we chose from $M_0$ to become unmatched. This second method is likely to add high-weight edges between nodes in $M_0$ and nodes outside of it to our matching. Mixing over these two methods actually returns a high-weight matching in expectation.

Note that for $\alpha>\frac{3}{4}$ this algorithm does not seem to provide better guarantees than for $\alpha=\frac{3}{4}$. Because of this, for $\alpha>\frac{3}{4}$, we simply run the same algorithm for $\alpha=\frac{3}{4}$

\vspace{3mm}
\begin{algorithm}[htb]
\caption{Algorithm for two-sided matching with partial ordinal information ($ \frac{1}{2} \le \alpha \le \frac{3}{4}$).}
\label{partial-twosided-alg}
  \SetKwInOut{Input}{Input}
    	\SetKwInOut{Output}{Output}
  \Input{$\mathcal{X}, \mathcal{Y}$, top $\alpha N$ of $P(\mathcal{X})$, top $\alpha N$ of $P(\mathcal{Y})$}
  \Output{ Perfect Bipartite Matching M }
  Initialize $E$ to be complete bipartite graph on $\mathcal{X}, \mathcal{Y}$, and $M_1 = M_2 = \emptyset $ \;
  Let $M_0 $ be the output returned by Algorithm~\ref{twosided-greedy} for $E$, $k = \alpha N$ \;
  Let $\mathcal{X}_T$ be the set of nodes in  $\mathcal{X}$ matched in $M_0$, $\mathcal{Y}_T$ be the set of nodes in $\mathcal{Y}$ matched in $M_0$, and $T$ be the complete bipartite graph on $\mathcal{X}_T, \mathcal{Y}_T$\ \;
  Let $\mathcal{X}_B = \mathcal{X} \backslash \mathcal{X}_T$, $\mathcal{Y}_B = \mathcal{Y} \backslash \mathcal{Y}_T$, and $B$ be the complete bipartite graph on $\mathcal{X}_B, \mathcal{Y}_B$\;

  \textbf{First Algorithm};\\
  $M_1 = M_0 \cup$ (The perfect matching output by Algorithm~\ref{random}  on $B$)\;

  \textbf{Second Algorithm};\\
  Choose $(2\alpha - 1)N$ edges from $M_0$ uniformly at random and add them to $M_2$ \;
  Let $X_A$ be the set of nodes in $ \mathcal{X}_T$ and not in $M_2$,  $Y_A$ be the set of nodes in $ \mathcal{Y}_T$ and not in  $M_2$\;
  Let $E_{AB}$ be the edges of the complete bipartite graph $(X_A, \mathcal{Y}_B)$ and $E_{AB}'$ be the edges of the complete bipartite graph $(\mathcal{X}_B, Y_A)$ \;
  Run random bipartite matching on the set of edges in $E_{AB}$ and $E_{AB}'$ separately to obtain perfect bipartite matchings and add the edges returned by the algorithm to $M_2$\;
  \textbf{Final Output:} Return $M_1$ with probability $\frac{3-2\alpha}{3-\alpha}$ and $M_2$ with probability $\frac{\alpha}{3-\alpha}$.
\end{algorithm}
\vspace{3mm}

\noindent Note that for $\alpha>\frac{3}{4}$ this algorithm does not seem to provide better guarantees than for $\alpha=\frac{3}{4}$. Because of this, for $\alpha>\frac{3}{4}$, we simply run the same algorithm for $\alpha=\frac{3}{4}$.

\begin{theorem}
  \label{partial-twosided}
  Algorithm \ref{partial-twosided-alg} returns a $\frac{(3 - 2\alpha)(3 - \alpha)}{2\alpha^2-3\alpha+3}$-approximation to the maximum-weight perfect matching given two-sided ordering when $ \frac{1}{2} \le \alpha \le \frac{3}{4}$. 
\end{theorem}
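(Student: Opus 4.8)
The plan is to express $E[w(M_1)]$ and $E[w(M_2)]$ in terms of three structural quantities --- the greedy weight $w(M_0)$, the total weight $w(B)=\sum_{x\in\mathcal{X}_B,\,y\in\mathcal{Y}_B}w(x,y)$ of the bottom graph, and the ``cross weight'' $\Sigma:=\sum_{x\in\mathcal{X}_T,\,y\in\mathcal{Y}_B}w(x,y)+\sum_{x\in\mathcal{X}_B,\,y\in\mathcal{Y}_T}w(x,y)$ --- and then to feed these into Lemma~\ref{upper-bound} applied to the bottom graph. First I would record the greedy guarantee: since $M_0$ is the greedy $\alpha N$-matching, Lemma~\ref{twosided-greedy-k} gives $w(OPT)\le\frac{3-2\alpha}{\alpha}w(M_0)$. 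Next I would evaluate the two randomized subroutines with Lemma~\ref{random-bound}. For the First Algorithm, $E[w(M_1)]=w(M_0)+\frac{1}{(1-\alpha)N}w(B)$. For the Second Algorithm, the $(2\alpha-1)N$ retained edges form a uniformly random subset of the $\alpha N$ edges of $M_0$, contributing $\frac{2\alpha-1}{\alpha}w(M_0)$ in expectation; moreover each node of $\mathcal{X}_T$ lands in $X_A$ with probability $\frac{1-\alpha}{\alpha}$ (and similarly $\mathcal{Y}_T$ into $Y_A$), so by linearity of expectation together with Lemma~\ref{random-bound} the two random matchings on $(X_A,\mathcal{Y}_B)$ and $(\mathcal{X}_B,Y_A)$ contribute exactly $\frac{1}{\alpha N}\Sigma$ in expectation. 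Hence $E[w(M_2)]=\frac{2\alpha-1}{\alpha}w(M_0)+\frac{1}{\alpha N}\Sigma$.

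The crucial step is to apply Lemma~\ref{upper-bound} with the \emph{bottom} graph $B$ (so $n=(1-\alpha)N$) playing the role of $G_T$, rather than the top graph; this is precisely what makes the error terms line up with $E[w(M_1)]$ and $E[w(M_2)]$. It yields $(1-\alpha)N\,w(OPT)\le\frac{3-2\alpha}{1-\alpha}w(B)+\Sigma$, using $2+\frac{N}{(1-\alpha)N}=\frac{3-2\alpha}{1-\alpha}$ and the fact that the two off-diagonal sums in the lemma are exactly $\Sigma$. Substituting $w(B)=(1-\alpha)N\,(E[w(M_1)]-w(M_0))$ and $\Sigma=\alpha N\,E[w(M_2)]-(2\alpha-1)N\,w(M_0)$, the $w(M_0)$ terms collect to $-2N\,w(M_0)$, giving $(1-\alpha)w(OPT)\le(3-2\alpha)E[w(M_1)]+\alpha E[w(M_2)]-2w(M_0)$. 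Finally I would use the greedy bound $w(M_0)\ge\frac{\alpha}{3-2\alpha}w(OPT)$ to absorb the $-2w(M_0)$ term: moving it to the left turns the coefficient of $w(OPT)$ into $(1-\alpha)+\frac{2\alpha}{3-2\alpha}=\frac{2\alpha^2-3\alpha+3}{3-2\alpha}$, so $\frac{2\alpha^2-3\alpha+3}{3-2\alpha}w(OPT)\le(3-2\alpha)E[w(M_1)]+\alpha E[w(M_2)]$. Since the output satisfies $E[w(M)]=\frac{1}{3-\alpha}\big((3-2\alpha)E[w(M_1)]+\alpha E[w(M_2)]\big)$, this rearranges to exactly $E[w(M)]\ge\frac{2\alpha^2-3\alpha+3}{(3-2\alpha)(3-\alpha)}w(OPT)$.

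I expect the main obstacle to be the bookkeeping in the expected-weight computation for $M_2$: one must check that the randomly discarded $M_0$-edges induce sets $X_A$ and $Y_A$ with the correct marginal probabilities $\frac{1-\alpha}{\alpha}$, verify that $|X_A|=|\mathcal{Y}_B|=|\mathcal{X}_B|=|Y_A|=(1-\alpha)N$ so that the two auxiliary random bipartite matchings are well-defined, and confirm that the retained $M_0$-edges together with these matchings form a perfect matching of $\mathcal{X}\cup\mathcal{Y}$. Once the identity $E[w(M_2)]=\frac{2\alpha-1}{\alpha}w(M_0)+\frac{1}{\alpha N}\Sigma$ is secured, everything else is the routine algebra sketched above. (Note that $\alpha\ge\frac12$ is needed for $(2\alpha-1)N\ge0$, and the mixing probabilities $\frac{3-2\alpha}{3-\alpha},\frac{\alpha}{3-\alpha}$ are nonnegative and sum to $1$ throughout $[\frac12,\frac34]$; at $\alpha=\frac34$ the bound evaluates to $1.8$, and at $\alpha=\frac12$ it evaluates to $2.5$, matching the $\alpha\le\frac12$ regime at the boundary.)
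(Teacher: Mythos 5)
Your proposal is correct and follows essentially the same route as the paper's proof: the same expected-weight computations for $M_1$ and $M_2$ via Lemma~\ref{random-bound}, the same application of Lemma~\ref{upper-bound} with the bottom graph $B$ in the role of $G_T$, and the same mixing probabilities; the only (cosmetic) difference is that you keep $w(M_0)$ symbolic until the final step rather than immediately replacing it by its lower bound $\frac{\alpha}{3-2\alpha}w(OPT)$. The algebra checks out and the $w(B)$ terms cancel exactly as in the paper.
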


\begin{proof}
  $ |\mathcal{X}_T| = |\mathcal{Y}_T| = \alpha N$, $ |\mathcal{X}_B| = |\mathcal{Y}_B| = (1 - \alpha)N$.\\
  By Lemma~\ref{twosided-greedy-k}, $w(M_0) \ge \frac{\alpha}{3 - 2\alpha}OPT$. By Lemma~\ref{random-bound}, the perfect matching output by Algorithm~\ref{random}  on $B$ has expected weight at least $\frac{1}{(1-\alpha)N}w(B)$. Therefore,

  \begin{equation*}
  E[w(M_1)] \ge \frac{\alpha}{3 - 2\alpha}OPT + \frac{1}{(1 - \alpha)N}w(B)
  \end{equation*}

  Because $|X_A| = |Y_A| = (1-\alpha)N$, and they are leftover nodes after $(2\alpha - 1)N$ nodes are chosen uniformly at random from $M_0$,
  \begin{equation*}
  E[w(E_{AB}) + w(E_{AB}')] = \frac{1- \alpha}{\alpha} w(T, B).
  \end{equation*}

  Recall that $w(T,B)$ is the total weight of all edges between $T$ and $B$. Let $M_{AB}$ be a random bipartite matching formed on edges $E_{AB}$ and $E'_{AB}$. By Lemma~\ref{random-bound},
  \begin{align*}
  E[w(M_{AB})] &= \frac{1}{(1 - \alpha)N}E[w(E_{AB})] + \frac{1}{(1 - \alpha)N}E[w(E_{AB}')] \\
        &= \frac{1}{(1 - \alpha)N}E[w(E_{AB}) + w(E_{AB}')]\\
        &= \frac{1}{ \alpha N} w(T, B)
  \end{align*}

  By Lemma~\ref{upper-bound}, with $M=OPT, T = B, n=(1 - \alpha)N$:
  \begin{equation*}
  (1 - \alpha)N w(OPT) \le (2 + \frac{1}{1 - \alpha}) w(B) + w(T, B)
  \end{equation*}

  \begin{align*}
   E[w(M_{AB})] &= \frac{1}{\alpha N} w(T, B)\\
                &\ge \frac{1}{\alpha N}  ((1 - \alpha)N w(OPT) - \frac{3 - 2\alpha}{1-\alpha}w(B) )
 \end{align*}

  $M_2$ contains $\frac{2\alpha - 1}{\alpha}$ fraction of edges randomly chosen from $M_0$, together with $M_{AB}$:
  \begin{align*}
   E[w(M_2)] &= \frac{2\alpha - 1}{\alpha} \times \frac{\alpha}{3 - 2\alpha} w(OPT) +  E[w(M_{AB})]\\
   &\ge \frac{2\alpha - 1}{3 - 2\alpha}w(OPT) + \frac{1}{\alpha N}  ((1 - \alpha)N w(OPT) - \frac{3 - 2\alpha}{1-\alpha}w(B) )\\
   &= \frac{4\alpha^2 - 6\alpha+3}{\alpha(3 - 2\alpha)}w(OPT) - \frac{3 - 2\alpha}{\alpha (1-\alpha) N}w(B)
  \end{align*}

  Return $M_1$ with probability $\frac{3 - 2\alpha}{3 - \alpha}$ and $M_2$ with probability $\frac{\alpha}{3 - \alpha}$. Then, the expected weight of our final matching is
  \begin{equation*}
   \frac{3 - 2\alpha}{3 - \alpha}E[w(M_1)] + \frac{\alpha}{3 - \alpha}E[w(M_2)]
   \ge \frac{2\alpha^2 - 3\alpha + 3}{(3 - 2\alpha)(3 - \alpha)}w(OPT).
  \end{equation*}
\end{proof}

\vskip 5pt\noindent{$\bm{\alpha\leq\frac{1}{2}}$~~}
Unlike the case for $\alpha\geq\frac{1}{2}$, this case requires different techniques than in \cite{anshelevich2015blind}. While the techniques above would still work, they will not give us a bound as good as the one we form below. The idea in this section is to do something similar to our one-sided algorithm for partial preferences: run the greedy algorithm for a while, and then switch to random. Unfortunately, if we simply run the greedy Algorithm \ref{twosided-greedy} and then switch to random, this will not form a good approximation. The reason why this is true is that an undominated edge which is picked by the greedy algorithm may be much worse than the average weight of an edge, and so the approximation factor of the random algorithm will dominate, giving only a 3-approximation. Even taking an undominated edge uniformly at random has this problem. We can fix this, however, by picking each undominated edge with an appropriate probability, as described below. Such an algorithm results in matchings which are guaranteed to be better than either RSD or Random, thus allowing us to prove the result.

\vspace{3mm}
\begin{algorithm}[htb]
\caption{Algorithm for two-sided matching with partial ordinal information ($  0 \le \alpha \le \frac{1}{2}$).}
\label{partial-twosided-alg2}
\SetKwInOut{Input}{Input}
    \SetKwInOut{Output}{Output}
\Input{$\mathcal{X}, \mathcal{Y}$, top $\alpha N$ of $P(\mathcal{X})$ and $P(\mathcal{Y})$}
Initialize $M =  \emptyset $, $G = (\mathcal{X}, \mathcal{Y}, E)$ \;
  \While {$E \ne \emptyset$} {
    Pick an agent $x$ uniformly at random from $\mathcal{X} $ \;
    Let $y$ denote $x$'s most preferred agent in $\mathcal{Y} $ \;
    $x_1 \gets x$, $y_1 \gets y$,  $c \gets y_1$\;
    \While {$(x_1, y_1)$ is not an undominated edge} {
      \eIf {$c$ = $y_1$} {
         $x_1 \gets y_1$'s most preferred agent in $\mathcal{X}$ \;
         $c \gets x_1$\;
      } {
         $y_1 \gets x_1$'s most preferred agent in $\mathcal{Y}$ \;
         $c \gets y_1$\;
      }
    }
    Take  $(x_1, y_1)$ from $E$ and add it to $M$ \;
    Remove $x_1$, $y_1$, and all edges containing $x_1$ or $y_1$ from the graph $G$ \;
    \If {$|M| = \alpha N$} {
      break\;
    }
   }
   Run Algorithm~\ref{random} for the remaining graph $G$, add the edges returned by the algorithm to $M$.
 \textbf{Final Output:} Return $M$.
\end{algorithm}
\vspace{3mm}

This algorithm guarantees that an undominated edge is chosen for any $x$ in any bipartite graph $G$. Now, before we reach an undominated edge, the weights of edges are non-decreasing in the order they are checked. 
Thus whenever a node $x$ is picked, the algorithm adds an undominated edge $(x_1,y_1)$ to the matching which is guaranteed to have higher weight than all edges leaving $x$. Note that it is not possible to apply this algorithm to one-sided matching because the preferences of agents in $\mathcal{Y}$ are not given, and thus we cannot detect which edges are undominated.

\begin{theorem}
  \label{partial-twosided-2}
  Algorithm \ref{partial-twosided-alg2} returns a $(3 - \alpha)$-approximation to the maximum-weight perfect matching given two-sided ordering when $ 0 \le \alpha \le \frac{1}{2}$. 
\end{theorem}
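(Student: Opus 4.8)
The plan is to mimic the structure of the one-sided partial-preference proof (Theorem~\ref{onesided-rsd-partial-bound}), but replacing RSD with the modified dictatorship procedure of Algorithm~\ref{partial-twosided-alg2} that always produces an undominated edge. The key advantage to exploit is the remark already made after the algorithm: whenever a node $x$ is drawn, the undominated edge $(x_1,y_1)$ added to $M$ has weight at least $w(\lambda(G,x))$ (the best edge leaving $x$), since the weights along the checked path are non-decreasing and the final edge is undominated. So I will define $Alg_i(S)$ as the expected weight of the edge added in round $i$ of the greedy phase on subgraph $S$, and note $Alg_i(S) \ge \frac{1}{|\mathcal{X}'|}\sum_{x} w(\lambda(S,x))$.

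First I would establish a structural inequality analogous to Lemma~\ref{onesided-partial-structure2}, but with the constant $3$ in place of $\sqrt{2}+1$. Concretely: for $L(G,\ell)$ the (random) subgraph after $\ell$ greedy rounds,
\begin{equation*}
w(OPT(G)) \le 3\sum_{i=1}^{\ell} Alg_i(G) + 3\,E[Rand(L(G,\ell))].
\end{equation*}
This follows by induction on $\ell$. For the base case $\ell=0$ it is just Theorem~\ref{perfect-random-bound}. For the inductive step, when we draw $x$ and add undominated edge $(x_1,y_1)$, I bound $w(OPT(G))$ in terms of $w(OPT(R'))$, where $R'$ is the graph with $x_1,y_1$ removed: using the triangle inequality and the fact that $(x_1,y_1)$ is undominated (so every edge incident to $x_1$ or $y_1$ has weight at most $w(x_1,y_1)$, hence the two OPT-edges destroyed by removing $x_1,y_1$ and the one OPT-edge re-added cost at most $2w(x_1,y_1)$ net — actually more carefully, deleting $x_1$ and $y_1$ from $OPT(G)$ and re-matching their two partners costs at most $3w(x_1,y_1)$ by Lemma~\ref{undominated-weight-bound}-style reasoning, or $2w(x_1,y_1)$ if one is more careful with the metric inequality). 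Averaging over the uniformly random choice of $x$ gives $w(OPT(G)) \le 3\,Alg_1(G) + \frac{1}{|\mathcal{X}|}\sum_x w(OPT(R'(x)))$, and then the inductive hypothesis on each $R'(x)$ together with $E[Rand(L(G,\ell))] = \frac{1}{|\mathcal{X}|}\sum_x E[Rand(L(R'(x),\ell-1))]$ closes the induction. Unlike the one-sided case, no ``Case 2'' is needed because we always pick an undominated edge, which is automatically comparable to everything.

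Second, I would prove the monotonicity/averaging lemma analogous to Lemma~\ref{alg_avg}: for all $i \le \alpha N$, $Alg_i(G) \ge E[Avg(G')]$ where $G' = L(G,\alpha N)$. The argument is the same as before — $Alg_i$ is non-increasing in $i$ because the best edge leaving a node only shrinks in a smaller graph, and the undominated edge picked in the first round beats a uniformly random edge out of the randomly chosen node, which in expectation is exactly the average edge weight. Finally, combining: with $\ell = \alpha N$,
\begin{align*}
w(OPT(G)) &\le 3\sum_{i=1}^{\alpha N} Alg_i(G) + 3(1-\alpha)N\,E[Avg(G')] \\
&\le (3-\alpha)\Big(\sum_{i=1}^{\alpha N} Alg_i(G) + (1-\alpha)N\,E[Avg(G')]\Big),
\end{align*}
using $\sum_{i=1}^{\alpha N} Alg_i(G) \ge \alpha N\,E[Avg(G')]$ to convert the coefficient $3$ on the RSD-like part into $3-\alpha$ on the whole (exactly the trick in Theorem~\ref{onesided-rsd-partial-bound}, with $\sqrt 2+1$ replaced by $3$), and then recognizing $\sum Alg_i(G) + E[Rand(G')] = E[w(M)]$.

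The main obstacle I expect is verifying the constant in the inductive step of the structural inequality: one must check that removing the two OPT-partners of an undominated edge $(x_1,y_1)$ and re-matching them (or simply deleting those OPT edges) loses at most a bounded multiple of $w(x_1,y_1)$, using the metric triangle inequality from Section~\ref{section-model} and the undominated property, and that this constant is consistent with getting a final bound of $3-\alpha$ rather than something larger. The one-sided proof handled this with the $\bar P, P, D$ gadget plus a ``Case 2'' escape hatch; here the undominated property should let us avoid the case split entirely, but the bookkeeping of which edges are destroyed and which can be salvaged — and confirming the net loss is at most $2w(x_1,y_1)$ per round on average so that the $3$ in front of $\sum Alg_i$ is not worsened — is the delicate part. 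A secondary subtlety is that $L(G,\ell)$ is a random variable whose distribution depends on the random node choices (and, in this algorithm, the deterministic undominated-edge search), so the expectations must be taken in the right order, exactly as in Lemma~\ref{onesided-partial-structure2}.
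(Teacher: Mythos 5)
Your overall skeleton is the paper's: a recursive structural inequality for $w(OPT)$ in terms of the undominated edges picked by the greedy phase plus the optimum of the residual graph, an averaging lemma showing each greedy round beats $E[Avg(G')]$, and a final linear interpolation; you also correctly observe that the undominated property removes the need for any ``Case 2'' escape hatch. But there is a genuine error in the constants that breaks the final step. You state the structural inequality as $w(OPT(G)) \le 3\sum_{i=1}^{\ell} Alg_i(G) + 3\,E[Rand(L(G,\ell))]$ and then claim that $\sum_{i=1}^{\alpha N} Alg_i(G) \ge \alpha N\, E[Avg(G')]$ converts this into a $(3-\alpha)$ bound. It cannot: if both coefficients are $3$, then $3A + 3B \le (3-\alpha)(A+B)$ is false for $\alpha>0$ (the left side already equals $3(A+B)$), and you only recover the trivial $3$-approximation. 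The interpolation trick in Theorem~\ref{onesided-rsd-partial-bound} works precisely because the greedy term carries a coefficient strictly smaller than $3$ (there $\sqrt{2}+1$); the surplus $\sum Alg_i - \alpha N\,E[Avg(G')] \ge 0$ is then used to pay for raising that smaller coefficient up to the common value, not for lowering a coefficient of $3$.

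The fix is exactly the sharpening you mention parenthetically but do not commit to: the coefficient on the greedy term must be $2$. This is the paper's Lemma~\ref{twosided-partial-structural}: if $(x_1,y_1)$ is the undominated edge added, then removing $x_1$ and $y_1$ destroys at most the two $OPT$-edges $P(x_1)=(x_1,b)$ and $\bar P(x_1)=(a,y_1)$, each of weight at most $w(x_1,y_1)$ by the definition of undominated, while the edge $(a,b)$ that could be re-added has nonnegative weight and can simply be dropped from the bound; no triangle inequality is needed at all. This yields $w(OPT(S)) \le \frac{1}{|\mathcal{X}'|}\sum_x w(OPT(R_D(S,x))) + \frac{2}{|\mathcal{X}'|}\sum_x w(\lambda_D(S,x))$, hence $w(OPT(G)) \le 2\sum_{i=1}^{\ell}Alg_i(G) + 3E[Rand(L(G,\ell))]$, and then the interpolation $2A+3B \le (3-\alpha)(A+B)$ does follow from $A \ge \frac{\alpha}{1-\alpha}B$. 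With that correction (and your averaging lemma, which matches the paper's Lemma~\ref{alg_avg2}), the argument goes through.
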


\begin{proof}
We use a similar method and the same notation as in Section~\ref{section-onesided} to proof this theorem. Essentially, because we are always picking undominated edges, we can form a linear interpolation between a factor of 2 and a factor of 3 for random matching, instead of between factors $\sqrt{2}+1$ and 3 as for one-sided preferences. The reason why we are able to form such an interpolation is entirely because of the probabilities with which we choose the undominated edges; if we simply chose arbitrary undominated edges or choose them uniformly at random, then there are examples where the random edge weights will dominate and result in a poor approximation, since undominated edges are only guaranteed to be within a factor of 3 of the average edge weight.

Besides those was used in the proof of Theorem~\ref{onesided-rsd-bound}, we introduce some new notation. Suppose that Algorithm~\ref{partial-twosided-alg2} picks $x \in \mathcal{X}'$,  and end up with an undominated edge $(x_1, y_1)$. Let $\lambda_D(S, x)$ denote the undominated edge picked by the algorithm for $x$ in graph $S$, $\lambda_D(S, x) = (x_1, y_1) = \lambda(S,x_1)$ in this case. And let $R_D(S, x)$ denote the remaining graph after removing $\lambda_D(S, x)$ and the edges connected to both vertexes of $\lambda_D(S, x)$.

We start with a lemma to bound the maximum weight matching,
\begin{lemma}
  \label{twosided-partial-structural}
  For any given subgraph $S = (\mathcal{X}', \mathcal{Y}', E')$,  $w(OPT(S)) \le \frac{1}{|\mathcal{X}'|} \sum_{x \in \mathcal{X}'} w(OPT(R_D(S, x))) + \frac{2}{|\mathcal{X}'|}\sum_{x \in \mathcal{X}'} w(\lambda_D(S, x))$. \\
\end{lemma}

\begin{proof}
  Using the same notation as in the proofs of Theorems \ref{onesided-rsd-bound} and \ref{onesided-rsd-partial-bound}, suppose that Algorithm~\ref{partial-twosided-alg2} picks $x \in \mathcal{X}'$,  and end up with an undominated edge $(x_1, y_1)$. Then $OPT(R_D(S, x)) = OPT(R(S, x_1))$ is at least as good as the matching obtained by removing $P(x_1)$ and $\bar{P}(x_1)$, and adding $D(x_1)$ to $OPT(S)$ (the rest stay the same):
\begin{align*}
w(OPT(R_D(S, x))) &\ge w(OPT(S)) - w(P(x_1)) - w(\bar{P}(x_1)) + w(D(x_1))\\
               &\ge w(OPT(S)) - w(P(x_1)) - w(\bar{P}(x_1))
\end{align*}

Because $ \lambda_D(S, x)$ is an undominated edge, $w(\lambda_D(S, x)) \ge P(x_1)$, $w( \lambda_D(S, x)) \ge \bar{P}(x_1)$,
\begin{equation*}
w(OPT(R_D(S, x))) \ge w(OPT(S)) - 2 w(\lambda_D(S, x))
\end{equation*}

Summing up for all $x$ in $\mathcal{X}'$,
\begin{equation*}
  \sum_{x \in \mathcal{X}'} w(OPT(R_D(S, x))) \ge |\mathcal{X}'| w(OPT(S)) - 2 \sum_{x \in \mathcal{X}'} w(\lambda_D(S, x))
\end{equation*}

\begin{equation*}
  w(OPT(S)) \le \frac{1}{|\mathcal{X}'|} \sum_{x \in \mathcal{X}'} w(OPT(R_D(S, x))) + \frac{2}{|\mathcal{X}'|}\sum_{x \in \mathcal{X}'} w(\lambda_D(S, x)).
\end{equation*}
\end{proof}

Then we bound $w(OPT(G))$ by the sum of expected weights of chosen edges in Algorithm~\ref{partial-twosided-alg2}, and the weight of the remaining subgraph. We still use $Alg_i(S)$ as the expected weight of chosen edge in round $i$, but note that for any $x$, the chosen edge is $\lambda_D(G, x)$ instead of $\lambda(G, x)$ as in Theorem~\ref{onesided-rsd-partial-bound}. By an identical argument as in our Lemma \ref{onesided-partial-structure2}, we have that the following holds:

\begin{equation*}
w(OPT(G)) \le 2\sum_{i=1}^{\ell}Alg_i(G) + 3E[Rand(L(G,\ell))].
\end{equation*}

%
%

We need to prove that a version of Lemma~\ref{alg_avg} still holds for Algorithm~\ref{partial-twosided-alg2}, as the edges are chosen differently from RSD in each step. In other words, we need to be able to compare $Rand(L(G,\ell))$ and $Alg_i$. This is where we need to use the fact that each undominated edge is carefully chosen with a specific probability. Let $G' = L(G,\alpha N)$ be a random variable representing the graph obtained by running our greedy algorithm on $G$ for $\alpha N$ rounds, which we can always do if we are given the top $\alpha N$ preferences of every agent.

  \begin{lemma}
  \label{alg_avg2}
    $\forall i \le \alpha N$, $Alg_i(G)$ is heavier than the expected average edge weight in $G'$, i.e., $Alg_i(G) \ge E[Avg(G']$.
  \end{lemma}

  \begin{proof}
  We must show that $Alg_1(G) \ge Alg_2(G)$. To see this,
  \begin{align*}
  Alg_2(G) &= \frac{1}{|\mathcal{X}|}\sum_{x\in\mathcal{X}}\frac{1}{|\mathcal{X}|-1}\sum_{y\in\mathcal{X}-\lambda_D(G,x)}w(\lambda_D(R_D(G,x),y))\\
           &\le \frac{1}{|\mathcal{X}|}\sum_{x\in\mathcal{X}}\frac{1}{|\mathcal{X}|-1}\sum_{y\in\mathcal{X}-\lambda_D(G,x)}w(\lambda_D(G,y))\\
  \end{align*}
  The inequality above is because the undominated edge found after selecting $y$ and then following the agents' top preferences in a smaller graph $R_D(G,x)$ is at most that in a larger graph $G$.

  Fix some $x\in\mathcal{X}$, and let $(x_1,y_1)$ be the edge $\lambda_D(G,x)$ be the edge added to the matching if $x$ is picked by our algorithm, and thus $x_1$ is the node removed from $\mathcal{X}$. Note that for the case when $x\neq x_1$, we still have that $w(\lambda_D(G,x)) = w(\lambda_D(G,x_1))$, since if $x_1$ is picked by our algorithm, then the undominated edge next to it $(x_1,y_1)$ is immediately returned. Therefore, in the sum above, we can replace $w(\lambda_D(G,x))$ (since $x$ still remains in $\mathcal{X}-\lambda_D(G,x)$) with $w(\lambda_D(G,x_1))$, and thus equivalently make the sum be over $\mathcal{X}-x$ instead of over $\mathcal{X}-\lambda_D(G,x)$.

  \begin{align*}
  Alg_2(G) &= \le \frac{1}{|\mathcal{X}|}\sum_{x\in\mathcal{X}}\frac{1}{|\mathcal{X}|-1}\sum_{y\in\mathcal{X}-x}w(\lambda_D(G,y))\\
           &= \frac{|\mathcal{X}|-1}{|\mathcal{X}|(|\mathcal{X}|-1)}\sum_{y\in\mathcal{X}}w(\lambda_D(G,y))\\
           &= Alg_1(G)
  \end{align*}
  By the same argument, we know that $Alg_i(G)\geq Alg_{i+1}(G)$ for all $i$.


 All that is left is to compare $Alg_{\alpha N+1}(G)$ with $E[Avg(G')]$. We know that the first round of RSD on any graph always performs better than the average edge weight. And for every $x$ that is chosen uniformly at random in the first step of Algorithm~\ref{partial-twosided-alg2}, the weight of final chosen edge $\lambda_D(x)$ is no smaller than $\lambda_(x)$. Therefore, the expected weight of chosen edge in the first round of Algorithm~\ref{partial-twosided-alg2} is no smaller than that of RSD, thus better than the average edge weight, $Alg_{\alpha N+1}(G) \ge E[Avg(G')]$. This concludes the proof.
\end{proof}

Finally, to finish the proof of Theorem \ref{partial-twosided-2}. Similarly to the proof of Theorem \ref{onesided-rsd-partial-bound}, it is easy to show that there is a linear tradeoff from 3 to 2-approximation for $\alpha=0$ to $\alpha=1$, which gives $w(OPT(G)) \le (3 - \alpha)E[w(M)]$, in which $M$ is a random variable representing the matching returned by Algorithm~\ref{partial-twosided-alg2}.
\end{proof}

\section{Total Ordering of Edge Weights}
\label{section-totalorder}
For the setting in which we are given the top $\alpha N^2$ edges of $G$ in order, we prove that for $\alpha=\frac{3}{4}$, we can obtain an approximation of $\frac{5}{3}$ in expectation. For larger $\alpha$, however, more information does not seem to help, and so we simply use the algorithm for $\alpha=\frac{3}{4}$ for any $\alpha>\frac{3}{4}$.



\vspace{3mm}
\begin{algorithm}[H]
\caption{Greedy Algorithm for Max $k$-Matching given the total ordering of edge weights.}
\label{total-order-greedy}
  Initialize $M =  \emptyset $, $E$ is the valid set of edges initialized to the complete bipartite graph $G$ \;
   \While {$E \ne \emptyset$} {
     Pick the heaviest edge $e=(x,y)$ from E and add it to $M$  \;
     Remove $x$, $y$, and all edges containing $x$ or $y$ from $E$ \;
     \If { $|M| = k$ }  {
     	break \;
      }
    }
  \textbf{Final Output:} Return $M$.
\end{algorithm}
\vspace{3mm}

\begin{lemma}
\label{partial-totalorder-greedy-k}
  Suppose $G=(\mathcal{X}, \mathcal{Y}, E)$ is a complete bipartite graph on the set of nodes $\mathcal{X}, \mathcal{Y}$ with $|\mathcal{X}| = |\mathcal{Y}| = N$. Given $k = \gamma N$, the performance of the greedy k-matching returned by Algorithm~\ref{total-order-greedy} with respect to the optimal perfect matching OPT is at least $\gamma$, for $\gamma\leq\frac{1}{2}$.
\end{lemma}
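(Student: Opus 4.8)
The plan is to prove the stronger statement $w(M)\ge\gamma\,w(OPT)$ by a direct edge-by-edge comparison of the greedy matching $M=\{e_1,\dots,e_k\}$ (where $e_i$ is the heaviest edge chosen in round $i$, i.e.\ the heaviest edge not incident to $e_1,\dots,e_{i-1}$) against a decreasing rearrangement $o_1,\dots,o_N$ of the edges of $OPT$, so that $w(o_1)\ge w(o_2)\ge\cdots\ge w(o_N)$. The heart of the argument is the claim that $w(e_i)\ge w(o_{2i-1})$ for every $i\le k$; the index $2i-1$ is legitimate because $2i-1\le 2k-1<N$ when $\gamma\le\frac12$.

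To establish this claim I would argue as follows. Just before round $i$, the greedy algorithm has deleted the $i-1$ previously chosen edges, which removes exactly $i-1$ vertices of $\mathcal{X}$ and $i-1$ vertices of $\mathcal{Y}$, hence $2(i-1)$ vertices in total. Since $OPT$ is a perfect matching, every vertex is the endpoint of precisely one edge of $OPT$, so at most $2(i-1)=2i-2$ of the $2i-1$ edges $o_1,\dots,o_{2i-1}$ can have a removed endpoint. Therefore at least one edge $o_j$ with $j\le 2i-1$ is still fully present in the remaining graph, and because $e_i$ is picked as the heaviest surviving edge, $w(e_i)\ge w(o_j)\ge w(o_{2i-1})$.

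With the claim in hand, summing over $i$ gives $w(M)=\sum_{i=1}^k w(e_i)\ge\sum_{i=1}^k w(o_{2i-1})$. I would finish with an averaging step: since $w(o_{2i-1})\ge w(o_{2i})$ term by term, $\sum_{i=1}^k w(o_{2i-1})\ge\frac12\sum_{j=1}^{2k}w(o_j)$; and since $o_1,\dots,o_{2k}$ are the $2k$ largest among the $N$ edge weights of $OPT$ (using $2k\le N$), their sum is at least $\frac{2k}{N}w(OPT)$. Combining these, $w(M)\ge\frac{k}{N}w(OPT)=\gamma\,w(OPT)$, which is exactly what the lemma asserts.

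I expect the only nontrivial step to be the survival claim $w(e_i)\ge w(o_{2i-1})$ — in particular the bookkeeping that a single greedy round removes only two vertices and hence destroys at most two edges of $OPT$; the remaining averaging is elementary. Minor points worth checking when writing this up carefully are the boundary case $\gamma=\frac12$ (so $2k=N$, where the averaging inequality is tight) and the fact that this argument never invokes the triangle inequality, only that $OPT$ is a perfect matching and that the greedy step always takes a globally heaviest available edge.
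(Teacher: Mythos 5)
Your proof is correct, and it takes a genuinely different route from the paper's. The paper derives this lemma from the charging machinery of Lemma~\ref{twosided-greedy-k}: it runs that argument for a perfect matching (so every edge of $OPT$ is ``Type~I'' and gets charged to an at-least-as-heavy incident edge of the greedy matching, each greedy edge absorbing at most two charges), then transfers all $N$ charges onto the heaviest $\gamma N$ greedy edges so that each carries at most $1/\gamma$ of them, and finally observes that Algorithm~\ref{total-order-greedy} is the instance of the undominated-edge greedy that happens to pick its edges in non-increasing weight order, so its $k$-matching consists exactly of those heaviest $\gamma N$ edges. Your argument is self-contained and more elementary: you sort $OPT$ as $w(o_1)\ge\cdots\ge w(o_N)$, note that after $i-1$ rounds only $2(i-1)$ vertices are gone and hence at most $2i-2$ edges of $OPT$ are destroyed, so some $o_j$ with $j\le 2i-1$ survives and the globally-heaviest rule forces $w(e_i)\ge w(o_{2i-1})$; the two averaging steps then give $w(M)\ge\frac{k}{N}w(OPT)=\gamma\,w(OPT)$. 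All steps check out (including the boundary case $2k=N$), and your observation that the triangle inequality is never needed is accurate --- the paper's route also avoids it here, since with $k=N$ no Type~II edges arise, but that fact is buried inside the imported lemma, whereas your proof makes it transparent. What the paper's approach buys is reuse of an already-established lemma; what yours buys is a shorter, cleaner, and more clearly metric-free derivation that also sidesteps the slightly delicate claim that the two greedy algorithms produce matchings of ``the same weight.''
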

\begin{proof}
  Let $M$ be the matching returned by Algorithm~\ref{twosided-greedy} for $k = N$. From Lemma~\ref{twosided-greedy-k}, $w(M) \ge \frac{1}{2}w(OPT)$. In the proof of Lemma~\ref{twosided-greedy-k}, each edge in $M$ is charged at most twice by edges of OPT, and there are $N$ charges in total. Transfer all the charges to the highest weight $\frac{N}{2}$ edges in $M$; this tells us that the highest weight $\frac{N}{2}$ edges of $M$ are at least $\frac{1}{2}w(OPT)$. Further transfer all the charges to the highest weight $\gamma N$ edges in $M$; this results in each such edge being charged to $1/\gamma$ times by edges of OPT. Therefore,  the highest weight $\gamma N$ edges of $M$ are at least $\frac{1}{\gamma}w(OPT)$ in total.

%
%
%
%
  Same as Algorithm~\ref{twosided-greedy}, Algorithm~\ref{total-order-greedy} also picks an undominated edge each round; the difference is the edges in the matching are picked in non-decreasing order. So Algorithm~\ref{total-order-greedy} returns a $k$-matching with the same weight as the highest $\gamma N$ edges in the perfect matching returned by Algorithm~\ref{twosided-greedy} on the same graph, which gives at least $\frac{1}{\gamma}w(OPT)$.
\end{proof}

\begin{lemma}
\label{partial-totalorder-greedy-alpha}
  Suppose $G=(\mathcal{X}, \mathcal{Y}, E)$ is a complete bipartite graph on the set of nodes $\mathcal{X}, \mathcal{Y}$ with $|\mathcal{X}| = |\mathcal{Y}| = N$. Given the order of the top $\alpha N^2$ edges in the graph, we are able to run greedy k-matching by Algorithm~\ref{total-order-greedy} for $k = (1 - \sqrt{1 - \alpha})N$.
\end{lemma}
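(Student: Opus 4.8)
The plan is to show that knowing the top $\alpha N^2$ edges in order is enough to faithfully execute the first $k = (1-\sqrt{1-\alpha})N$ rounds of Algorithm~\ref{total-order-greedy}. First I would observe that the greedy $k$-matching can be produced by the following equivalent procedure: scan the \emph{known} edges in decreasing order of weight, and add an edge to $M$ exactly when both of its endpoints are still unmatched, stopping once $|M| = k$. This procedure agrees with Algorithm~\ref{total-order-greedy} precisely when, at the start of every round $j \le k$, the heaviest edge of the current remaining subgraph is among the top $\alpha N^2$ edges: in that case every edge ranked strictly above it is incident to a node that has already been removed, so the scan reaches that edge next, and no heavier unknown edge can exist in the remaining subgraph.

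Next I would bound the global rank of the heaviest available edge at the start of round $j$. After $j-1$ greedy rounds, one node on each side has been deleted per round, so the remaining subgraph is complete bipartite on $N-j+1$ nodes on each side and therefore contains $(N-j+1)^2$ edges; equivalently, exactly $N^2 - (N-j+1)^2$ edges are incident to a node already removed. Hence the heaviest remaining edge has global rank at most $N^2 - (N-j+1)^2 + 1$.

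Then comes the one computation: for $j \le k = (1-\sqrt{1-\alpha})N$ we have $N-j+1 \ge \sqrt{1-\alpha}\,N + 1$, so $(N-j+1)^2 \ge (1-\alpha)N^2 + 2\sqrt{1-\alpha}\,N + 1 \ge (1-\alpha)N^2 + 1$, which yields $N^2 - (N-j+1)^2 + 1 \le \alpha N^2$. Thus at every round $j \le k$ the heaviest available edge is one of the known top $\alpha N^2$ edges, so the scanning procedure correctly carries out all $k$ rounds, which is exactly the claim.

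I do not expect a real obstacle here; the only point requiring care is the equivalence in the first step — verifying that we can always \emph{recognize} the heaviest available edge rather than prematurely selecting a lighter known edge ahead of a heavier unknown one — and this is exactly what the rank bound rules out. A minor bookkeeping issue is rounding in the quantities $\alpha N$ and $(1-\sqrt{1-\alpha})N$, but since the paper already assumes these relevant counts are integral (or one may round down), it does not affect the stated bound.
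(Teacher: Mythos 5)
Your proof is correct and follows essentially the same counting argument as the paper: both bound the number of edges incident to removed nodes after $j$ rounds by $N^2-(N-j)^2 = 2Nj-j^2$ and solve $2Nk-k^2=\alpha N^2$ to get $k=(1-\sqrt{1-\alpha})N$. Your per-round rank bound on the heaviest remaining edge just makes explicit a point the paper leaves implicit, namely that the heaviest edge of the remaining subgraph is always among the known top $\alpha N^2$ edges and hence can be correctly identified.
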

\begin{proof}
  In the first step of Algorithm~\ref{total-order-greedy}, the heaviest edge is taken, and $2N-1$ edges are removed, so at most $2N-1$ edges are lost from the top $\alpha N^2$ edges. After the first $k$ steps of Algorithm~\ref{total-order-greedy}, the total number of removed edges is:
  \begin{align*}
    & 2N - 1 + 2(N-1) -1 + ... + 2(N-(k-1)) - 1 \\
    &= 2(N + N-1 + ... + N-(k-1)) - k\\
    &= 2Nk - k^2
  \end{align*}

  \noindent Given the order of top $\alpha N^2$ edges, we are able to run Algorithm~\ref{total-order-greedy} for at least $k$ steps until $2Nk - k^2 = \alpha N^2$. Solve the equation for $k$, $k = (1-\sqrt{1-\alpha})N$.
\end{proof}

The algorithm for bipartite matching with partial ordinal information is similar to that with partial two-sided ordinal information, except that we only need to consider the case that $k \leq \frac{1}{2}N$, i.e., $1-\sqrt{1-\alpha} \le \frac{1}{2}$, $\alpha \le \frac{3}{4}$. In two-sided model, we are given the top $\alpha N$ preferences for both sets of agents, and able to run greedy algorithm for $k = \alpha N$. While in total ordering model, we could only run greedy algorithm for  $k = (1-\sqrt{1-\alpha})N$ given the order of the top $\alpha N^2$ edges. Different from two-sided model, $\alpha$ does not equal to the number of agent pairs we are able to match by greedy algorithm in total ordering model.
%
%

\vspace{3mm}
\begin{algorithm}[H]
\caption{Algorithm for matching given partial total ordering.}
\label{partial-total-order-alg}
  \SetKwInOut{Input}{Input}
  \SetKwInOut{Output}{Output}
  \Input{$\mathcal{X}, \mathcal{Y}$, order of the top $\alpha N^2$ edges in the graph.}
  \Output{ Perfect Bipartite Matching M }
  Initialize $E$ to be complete bipartite graph on $\mathcal{X}, \mathcal{Y}$, and $M_1 = M_2 = \emptyset $ \;
  Let $M_0$ be the output returned by Algorithm~\ref{total-order-greedy} for $E$, $k = (1 - \sqrt{1 - \alpha}) N$. Let $\alpha_1 = 1 - \sqrt{1 - \alpha}$, then $k = \alpha_1 N$ \;
  Let $\mathcal{X}_T$ be the set of nodes in  $\mathcal{X}$ matched in $M_0$, $\mathcal{Y}_T$ be the set of nodes in $\mathcal{Y}$ matched in $M_0$, and $T$ be the complete bipartite graph on $\mathcal{X}_T, \mathcal{Y}_T$\ \;
  Let $\mathcal{X}_B$ be the set of nodes in $\mathcal{X}$ not matched in $M_0$, $\mathcal{Y}_B$ be the set of nodes in $\mathcal{Y}$ not matched in $M_0$, and $B$ is the complete bipartite graph on $\mathcal{X}_B, \mathcal{Y}_B$\;

  \textbf{First Algorithm};\\
  $M_1 = M_0 \cup$ (The perfect matching output by Algorithm~\ref{random} on $B$)\;

  \textbf{Second Algorithm};\\

  Choose $(1-2\alpha_1)N$ nodes both from $\mathcal{X}_B$ and $ \mathcal{Y}_B$ uniformly at random, get the perfect matching output by Algorithm~\ref{random}  on these nodes and add the results to $M_2$ \;
  Let $X_A$ be the set of nodes in $ \mathcal{X}_B$ and not in $M_2$,  $Y_A$ be the set of nodes in $ \mathcal{Y}_B$ and not in  $M_2$\;
  Let $E_{AT}$ be the edges of the complete bipartite graph $(X_A, \mathcal{Y}_T)$ and $E_{AT}'$ be the edges of the complete bipartite graph $(\mathcal{X}_T, Y_A)$ \;
  Run random bipartite matching on the set of edges in $E_{AT}$ and $E_{AT}'$ separately to obtain perfect bipartite matchings and add the edges returned by the algorithm to $M_2$\;

  \textbf{Final Output:} Return $M_1$ with probability $\frac{2}{2 + \sqrt{1 - \alpha}}$ and $M_2$ with probability $\frac{\sqrt{1 - \alpha}}{2 + \sqrt{1 - \alpha}}$.
\end{algorithm}
\vspace{3mm}

\begin{theorem}
\label{total-order}
  Algorithm \ref{partial-total-order-alg} returns a $\frac{2 + \sqrt{1 - \alpha}}{2 - \sqrt{1 - \alpha}}$-approximation to the maximum-weight matching in expectation for $\alpha\leq \frac{3}{4}$, as shown in Figure~\ref{fig:plot_figures}.
\end{theorem}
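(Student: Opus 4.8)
The plan is to follow the same two-algorithm template used in the proof of Theorem~\ref{partial-twosided}: bound $E[w(M_1)]$ and $E[w(M_2)]$ separately in terms of $w(OPT)$, $w(M_0)$, the weight $w(B)$ of the ``bottom'' subgraph on the unmatched nodes, and the weight $w(T,B)$ of the edges crossing between $T$ and $B$, and then pick the final mixing probabilities so that every intermediate quantity except $w(OPT)$ cancels. Throughout I would write $s=\sqrt{1-\alpha}$, so that $\alpha_1=1-s$ and $k=\alpha_1 N$; here $s\in[\tfrac12,1]$, $|\mathcal X_T|=|\mathcal Y_T|=\alpha_1 N$, $|\mathcal X_B|=|\mathcal Y_B|=sN$, and the identity $\alpha_1+s=1$ will do all the real work at the end. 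Note $1-2\alpha_1=2s-1\ge 0$ precisely when $\alpha\le\tfrac34$, which is why this is the range we can handle. By Lemma~\ref{partial-totalorder-greedy-alpha} we are indeed able to run Algorithm~\ref{total-order-greedy} for $\alpha_1 N$ rounds given the top $\alpha N^2$ edges, and by Lemma~\ref{partial-totalorder-greedy-k} (applied with $\gamma=\alpha_1\le\tfrac12$) the resulting matching satisfies $w(M_0)\ge\alpha_1\,w(OPT)$.

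The easy half is $M_1=M_0\cup(\text{random perfect matching on }B)$. Since $B$ is complete bipartite with $sN$ nodes on each side, Lemma~\ref{random-bound} says its random part has expected weight $\tfrac{1}{sN}w(B)$, so $E[w(M_1)]\ge\alpha_1\,w(OPT)+\tfrac{1}{sN}w(B)$.

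The bulk of the work is computing $E[w(M_2)]$. Here $M_2$ consists of three independent random pieces: (i) a random perfect matching on a uniformly random $(1-2\alpha_1)N$-subset of $\mathcal X_B$ and of $\mathcal Y_B$; (ii) a random bipartite matching on $E_{AT}$ joining the $\alpha_1 N$ leftover nodes $X_A\subseteq\mathcal X_B$ to $\mathcal Y_T$; and (iii) the analogous matching on $E_{AT}'$ joining $\mathcal X_T$ to $Y_A$. I would bound each piece using the standard ``expected fraction of edges surviving a uniform random node-subset'' computation (exactly as in the $E_{AB},E_{AB}'$ step of the proof of Theorem~\ref{partial-twosided}) together with Lemma~\ref{random-bound}. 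Being careful that $X_A$ and $Y_A$ each have size $\alpha_1 N$ (not $sN$), piece (i) contributes $\tfrac{1-2\alpha_1}{s^2 N}w(B)$ in expectation and pieces (ii) and (iii) together contribute $\tfrac{1}{sN}w(T,B)$. Then I would apply Lemma~\ref{upper-bound} with its ``$G_T$'' equal to $B$, $n=sN$, and $M=OPT$, which gives $sN\,w(OPT)\le(2+\tfrac1s)w(B)+w(T,B)$; rearranging, substituting, and using $\alpha_1+s=1$ collapses the two $w(B)$ terms and yields $E[w(M_2)]\ge w(OPT)-\tfrac{2}{s^2 N}w(B)$.

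Finally I would combine: $2E[w(M_1)]+sE[w(M_2)]\ge 2w(M_0)+\tfrac{2}{sN}w(B)+s\,w(OPT)-\tfrac{2}{sN}w(B)=2w(M_0)+s\,w(OPT)\ge(2\alpha_1+s)w(OPT)=(2-s)w(OPT)$, the $w(B)$ terms cancelling exactly. Returning $M_1$ with probability $\tfrac{2}{2+s}$ and $M_2$ with probability $\tfrac{s}{2+s}$ (which sum to $1$) thus gives expected weight at least $\tfrac{1}{2+s}\bigl(2E[w(M_1)]+sE[w(M_2)]\bigr)\ge\tfrac{2-s}{2+s}w(OPT)=\tfrac{2-\sqrt{1-\alpha}}{2+\sqrt{1-\alpha}}w(OPT)$, as claimed. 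The step I expect to be the main obstacle is precisely the expected-weight bookkeeping for the three pieces of $M_2$: one must track which node sets have size $\alpha_1 N$ versus $sN$ so that the coefficient of $w(B)$ in $E[w(M_2)]$ comes out to exactly $-\tfrac{2}{s^2 N}$ and annihilates the $+\tfrac{2}{sN}$ coming from $2E[w(M_1)]$. Everything else is a bipartite re-derivation of the framework of~\cite{anshelevich2015blind}, with the twist that the number of greedy rounds is $\alpha_1 N=(1-\sqrt{1-\alpha})N$ rather than $\alpha N$, since greedy on the top $\alpha N^2$ edges exhausts them after roughly $2Nk-k^2$ removals.
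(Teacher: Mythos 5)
Your proposal is correct and follows essentially the same route as the paper's own proof: the same reduction to Lemmas~\ref{partial-totalorder-greedy-k}, \ref{partial-totalorder-greedy-alpha}, \ref{random-bound}, and \ref{upper-bound} (with $n=(1-\alpha_1)N=sN$), the same bookkeeping giving $E[w(M_2)]\ge w(OPT)-\frac{2}{(1-\alpha_1)^2N}w(B)$, and the same mixing probabilities $\frac{2}{2+s}$ and $\frac{s}{2+s}$ that make the $w(B)$ terms cancel. No gaps.
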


  \begin{proof}
    By Lemma \ref{partial-totalorder-greedy-alpha}, we are able to run Algorithm~\ref{total-order-greedy} for $k = (1 - \sqrt{1 - \alpha}) N$. We analyze the algorithm when $\alpha \le \frac{3}{4}$, $\alpha_1 = 1 - \sqrt{1 - \alpha} \le \frac{1}{2}$.\\

    $ |\mathcal{X}_T| = |\mathcal{Y}_T| = \alpha_1 N$, $ |\mathcal{X}_B| = |\mathcal{Y}_B| = (1-\alpha_1)N$.\\

    By Lemma~\ref{partial-totalorder-greedy-k}, $w(M_0) \ge \alpha_1 w(OPT)$. By Lemma~\ref{random-bound}, the perfect matching output by Algorithm~\ref{random}  on $B$ has expected weight $\frac{1}{(1-\alpha_1)N}w(B)$. Thus,
    \begin{equation*}
    E[w(M_1)] \ge \alpha_1 w(OPT) + \frac{1}{(1-\alpha_1)N} w(B)
    \end{equation*}

    Analysis of $E[w(M_2)]$ is very similar to the case when $\alpha_1\geq \frac{1}{2}$ for Algorithm~\ref{partial-twosided-alg}, except that now $B$ is larger than $T$, and so we form a random bipartite matching using all of the nodes in $T$ instead of just some of them. Formally, because $|X_A| = |Y_A| = \alpha_1 N$, and they are leftover nodes after $(1-2\alpha_1)N$ nodes are chosen uniformly at random from $B$, we know that
    \begin{equation*}
    E[w(E_{AT}) + w(E_{AT}^{'})] = \frac{\alpha_1}{1-\alpha_1} w(T, B).
    \end{equation*}

    Let $M_{AT}$ be the random bipartite matching formed between sets $A$ and $T$. By Lemma~\ref{random-bound},
    \begin{align*}
    E[w(M_{AT})] &= \frac{1}{\alpha_1 N}E[w(E_{AT})] + \frac{1}{\alpha_1 N}E[w(E_{AT}^{'})] \\
          &= \frac{1}{(1 - \alpha_1)N} w(T, B)
    \end{align*}

    By Lemma~\ref{upper-bound}, setting $M=OPT, T = B, n=(1 - \alpha_1)N$,
    \begin{equation*}
    (1 - \alpha_1)N w(OPT) \le (2 + \frac{1}{1 - \alpha_1}) w(B) + w(T, B).
    \end{equation*}

  Thus,
    \begin{align*}
     E[w(M_{AT})] &= \frac{1}{(1 - \alpha_1)N} w(T, B)\\
                  &\ge \frac{1}{(1 - \alpha_1)N}  ((1 - \alpha_1)N w(OPT) - \frac{3 - 2\alpha_1}{1-\alpha_1}w(B) )
   \end{align*}

    \begin{align*}
     E[w(M_2)] &= \frac{1 - 2\alpha_1}{1 - \alpha_1} \times \frac{1}{(1 - \alpha_1) N} w(B) +  E[w(M_{AT})]\\
     &\ge \frac{1 - 2\alpha_1}{(1 - \alpha_1)^2 N}w(B) + \frac{1}{(1 - \alpha_1) N}  ((1 - \alpha_1)N w(OPT) - \frac{3 - 2\alpha_1}{1-\alpha_1}w(B) )\\
     &= w(OPT) - \frac{2}{ (1-\alpha_1)^2 N}w(B)
    \end{align*}


    Return $M_1$ with probability $\frac{2}{3-\alpha_1} = \frac{2}{2 + \sqrt{1 - \alpha}}$, and $M_2$ with probability  $\frac{1 - \alpha_1}{3 - \alpha_1} = \frac{\sqrt{1 - \alpha}}{2 + \sqrt{1 - \alpha}}$,
    \begin{align*}
    \frac{2}{3-\alpha_1}E[w(M_1)] + \frac{1-\alpha_1}{3 - \alpha_1}E[w(M_2)] &\ge \frac{1 + \alpha_1}{3 - \alpha_1} w(OPT)\\
    &=\frac{2 - \sqrt{1 - \alpha}}{2 + \sqrt{1 - \alpha}} w(OPT)
    \end{align*}
  \end{proof}

\section{One-sided Preferences with Restricted Edge Weights}
\label{section-onesided-restricted}
In previous sections, we made the assumption that the agents lie in a metric space, and thus the edge weights, although unknown to us, must follow the triangle inequality. In this section we once again consider the most restrictive type of agent preferences --- that of one-sided preferences --- but now instead of assuming that agents lie in a metric space, we instead consider settings where edges weights cannot be infinitely different from each other. This applies to settings where the agents are at least somewhat indifferent and the items are somewhat similar; the least-preferred agent and the most-preferred items differ only by a constant factor to any agent. Indeed, when for example purchasing a house in a reasonable market (i.e., once houses that almost no one would buy have been removed from consideration), it is unlikely that any agent would like house $x$ so much more than house $y$ that they would be willing to pay hundreds of times more for $x$ than for $y$.

More formally, for each agent $i \in \mathcal{X}$, we are given a strict preference ordering $P_i$ over the agents in $\mathcal{Y}$. In this section we assume that the highest weight edge $e_{max}$ is at most $\beta$ times of the lowest weight edge $e_{min}$. We normalize the lowest weight edge $e_{min}$ in the graph to $w(e_{min}) = 1$; then for any edge $e \in E$, $w(e) \le \beta$. We use similar analysis as in Section~\ref{section-onesided}, except that instead of getting bounds by using the triangle inequality, the relationships among edge weights are bounded by our assumption of the highest and lowest weight edge ratio. As stated above, we no longer assume the agents lie in a metric space in this section.
\vspace{5mm}

\begin{theorem}
\label{onesided-rsd-bound2}
  Suppose $G=(\mathcal{X}, \mathcal{Y}, E)$ is a complete bipartite graph on the set of nodes $\mathcal{X}, \mathcal{Y}$ with $|\mathcal{X}| = |\mathcal{Y}| = N$. $w(e_{min}) = 1$, $\forall e \in E$, $w(e) \le \beta$. The expected weight of the perfect matching returned by Algorithm~\ref{onesided-rsd} is $w(M) \ge \frac{1}{\sqrt{\beta - \frac{3}{4}} + \frac{1}{2}} w(OPT)$.
\end{theorem}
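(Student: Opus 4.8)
The plan is to follow the proof of Theorem~\ref{onesided-rsd-bound} almost line for line, replacing every use of the triangle inequality by the weight--ratio bound $1 \le w(e)\le \beta$ (the minimum edge weight being normalized to $1$), and tracking how the constant changes. Write $c=\sqrt{\beta-\tfrac34}+\tfrac12$ and record the identity $c^2-c=\beta-1$, equivalently $1+\tfrac{\beta-1}{c}=c$; this is the algebraic fact that makes the bound come out exactly. Note also $c\ge 1$ since $\beta\ge 1$. Lemma~\ref{onesided-weight} holds verbatim (its proof uses only linearity of expectation), so $RSD(S)=\tfrac1{|\mathcal{X}'|}\sum_{x}w(\lambda(S,x))+\tfrac1{|\mathcal{X}'|}\sum_{x}RSD(R(S,x))$ still holds, and it suffices to prove an analogue of Lemma~\ref{onesided-structural} and then rerun the induction of Proposition~\ref{onesided-proposition}. (Here, as there, $w(M)$ in the theorem means $E[w(M)]$.)

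The structural claim I would establish is: for every balanced complete bipartite subgraph $S=(\mathcal{X}',\mathcal{Y}',E')$ with edge weights in $[1,\beta]$, either \textbf{Case 1}, $w(OPT(S)) \le \tfrac1{|\mathcal{X}'|}\sum_{x\in\mathcal{X}'} w(OPT(R(x))) + \tfrac{c}{|\mathcal{X}'|}\sum_{x\in\mathcal{X}'} w(\lambda(x))$, or \textbf{Case 2}, $w(OPT(S)) \le c\,|\mathcal{X}'|$. Here Case~2 plays the role that ``$w(OPT(S))\le(\sqrt2+1)w(Min(S))$'' played before: RSD always returns a perfect matching with $|\mathcal{X}'|$ edges each of weight at least $1$, so $RSD(S)\ge|\mathcal{X}'|$, and hence under Case~2 we immediately get $RSD(S)\ge\tfrac1c w(OPT(S))$; under Case~1 the inductive step $w(OPT(S))\le \tfrac{c}{|\mathcal{X}'|}\sum_x RSD(R(x))+\tfrac{c}{|\mathcal{X}'|}\sum_x w(\lambda(x))=c\cdot RSD(S)$ goes through exactly as in Proposition~\ref{onesided-proposition} via Lemma~\ref{onesided-weight}, with the base case (a single edge, where RSD is optimal and $c\ge1$) immediate, and smaller graphs still having weights in $[1,\beta]$.

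To prove the structural claim, suppose Case~2 fails, i.e.\ $|\mathcal{X}'|<\tfrac1c w(OPT(S))$. Using the notation $\lambda(x),P(x),\bar P(x),D(x)$ of the proof of Lemma~\ref{onesided-structural}, the swap bound $w(OPT(R(x)))\ge w(OPT(S))-w(P(x))-w(\bar P(x))+w(D(x))$ needs no metric, and the degenerate case $\lambda(x)\in OPT(S)$ is handled exactly as there. Summing over $x$ and using $\sum_x w(P(x))=w(OPT(S))$, $\sum_x w(\bar P(x))\le\beta|\mathcal{X}'|$, and $\sum_x w(D(x))\ge|\mathcal{X}'|$ yields $\tfrac1{|\mathcal{X}'|}\sum_x w(OPT(R(x)))\ge\bigl(1-\tfrac1{|\mathcal{X}'|}\bigr)w(OPT(S))-(\beta-1)$, so Case~1 reduces to the single inequality $w(OPT(S))+(\beta-1)|\mathcal{X}'|\le c\sum_x w(\lambda(x))$. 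I then bound the left-hand side using $w(OPT(S))\le\sum_x w(\lambda(x))$ (each $\lambda(x)$ is $x$'s heaviest edge) together with $|\mathcal{X}'|<\tfrac1c w(OPT(S))\le\tfrac1c\sum_x w(\lambda(x))$ (from Case~2 failing), getting at most $\bigl(1+\tfrac{\beta-1}{c}\bigr)\sum_x w(\lambda(x))=c\sum_x w(\lambda(x))$ by the defining identity of $c$, which is exactly what is required.

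I expect the main obstacle to be conceptual rather than computational: recognizing that in the non-metric regime the quantity $w(Min(S))$ and the triangle-inequality chain that introduced it can be dispensed with entirely, replaced by the trivial lower bound $|\mathcal{X}'|$ on the weight of any perfect matching together with the crude per-edge bounds $1\le w(e)\le\beta$, and that $\sqrt{\beta-3/4}+1/2$ is precisely the positive root of $c^2-c=\beta-1$ that makes the final inequality tight (so no slack is wasted). A minor technical point to watch is that $x\mapsto a(x)$ and $x\mapsto(a(x),b(x))$ are not bijections (unlike the $OPT$-partner map $x\mapsto b(x)$), so $\sum_x w(\bar P(x))$ and $\sum_x w(D(x))$ must be controlled by the blunt bounds $\beta|\mathcal{X}'|$ and $|\mathcal{X}'|$ rather than identified with the weight of a specific matching --- but those blunt bounds are all the argument needs.
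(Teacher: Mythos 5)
Your proposal is correct and follows essentially the same route as the paper: the same swap argument, the same blunt bounds $\sum_x w(\bar P(x))\le\beta|\mathcal{X}'|$ and $\sum_x w(D(x))\ge|\mathcal{X}'|$, the same use of $w(OPT(S))\le\sum_x w(\lambda(x))$, and the same defining identity $1+\tfrac{\beta-1}{c}=c$ for $c=\sqrt{\beta-\tfrac34}+\tfrac12$. The only (harmless) difference is cosmetic: you state Case~2 directly as $w(OPT(S))\le c\,|\mathcal{X}'|$ and use $RSD(S)\ge|\mathcal{X}'|$, whereas the paper keeps Case~2 as $w(OPT(S))\le c\,w(Min(S))$ and then invokes $w(Min(S))\ge|\mathcal{X}'|$ inside the proof of Lemma~\ref{onesided-structural-restricted} --- the two dichotomies are interchangeable via exactly that inequality.
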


\begin{figure}[H]
\begin{center}
\includegraphics[scale=0.5]{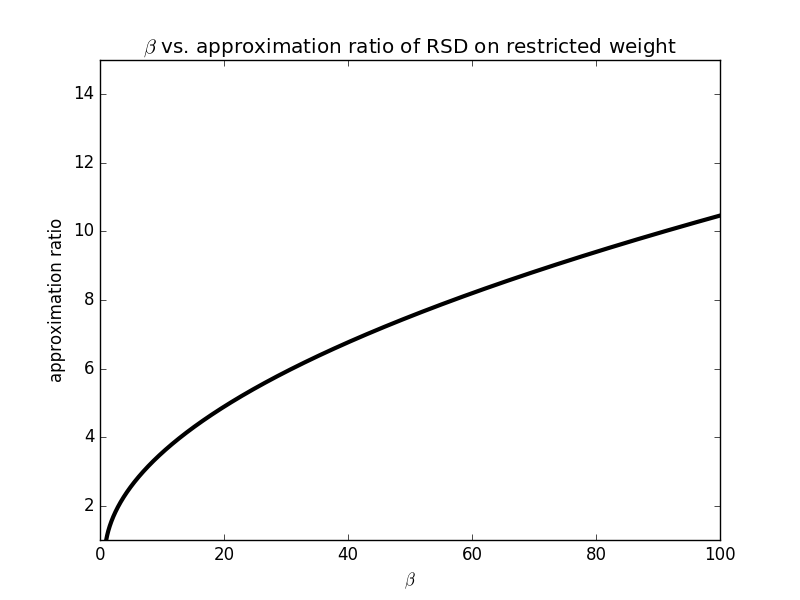}
\end{center}
\caption{$\beta$ vs. approximation ratio of RSD on restricted weight bipartite graph. For edges with a small difference in weight, we still obtain a reasonable approximation to the optimum matching.}
\label{fig:rsd-restricted}
\end{figure}

\begin{proof}
  We use the same notation as in Section~\ref{section-onesided}. Once again, our proof relies on the following claim, similar to Lemma \ref{onesided-structural}. Once the statement below is proven, the rest of the proof proceeds exactly as in Theorem \ref{onesided-rsd-bound}, simply replacing $\sqrt{2} + 1$ with $\sqrt{\beta - \frac{3}{4}} + \frac{1}{2}$.

  \begin{lemma}
    \label{onesided-structural-restricted}
    For any given subgraph $S = (\mathcal{X}', \mathcal{Y}', E')$, one of the following two cases must be true:\\
    \textbf{Case 1},  $w(OPT(S)) \le \frac{1}{|\mathcal{X}'|} \sum_{x \in \mathcal{X}'} w(OPT(R(S, x))) + \frac{\sqrt{\beta - \frac{3}{4}} + \frac{1}{2}}{|\mathcal{X}'|}\sum_{x \in \mathcal{X}'} w(\lambda(x))$\\
    \textbf{Case 2}, $w(OPT(S)) \le (\sqrt{\beta - \frac{3}{4}} + \frac{1}{2})w(Min(S))$\\
  \end{lemma}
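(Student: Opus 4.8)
The plan is to mirror the proof of Lemma~\ref{onesided-structural} almost verbatim, replacing every appeal to the triangle inequality by the weaker estimate that follows from $1\le w(e)\le\beta$ for all $e\in E'$. Write $c=\sqrt{\beta-\tfrac34}+\tfrac12$ for the target constant; the single algebraic fact we will use is that $\big(c-\tfrac12\big)^2=\beta-\tfrac34$, i.e.
\[
c^{2}=c+\beta-1,\qquad\text{equivalently}\qquad 1+\frac{\beta-1}{c}=c.
\]
Exactly as in Lemma~\ref{onesided-structural}, it suffices to show that if \textbf{Case 2} fails for $S$, i.e. $w(OPT(S))>c\,w(Min(S))$, then \textbf{Case 1} holds for $S$ with constant $c$. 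We keep the notation $\lambda(x),P(x),\bar P(x),D(x)$ of Figure~\ref{fig:rsd}: for $x\in\mathcal X'$ with most-preferred agent $y$, let $b$ be $x$'s partner in $OPT(S)$ and $a$ be $y$'s partner in $OPT(S)$, so $\lambda(x)=(x,y)$, $P(x)=(x,b)$, $\bar P(x)=(a,y)$, $D(x)=(a,b)$. We will never use $Min(S)$ except through the hypothesis that Case~2 fails.

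The first half of the argument is identical to Lemma~\ref{onesided-structural} and uses no metric property. Deleting $P(x)$ and $\bar P(x)$ from $OPT(S)$ and inserting $D(x)$ gives a perfect matching on $R(S,x)$, so $w(OPT(R(x)))\ge w(OPT(S))-w(P(x))-w(\bar P(x))+w(D(x))$ (with equality forced in the degenerate case $\lambda(x)\in OPT(S)$, where $a=x$, $y=b$, and $P(x)=\bar P(x)=D(x)=\lambda(x)$). Summing over $x\in\mathcal X'$ and using $\sum_x w(P(x))=w(OPT(S))$ yields
\[
\frac{1}{|\mathcal X'|}\sum_{x\in\mathcal X'}w(OPT(R(x)))\ \ge\ \Big(1-\tfrac{1}{|\mathcal X'|}\Big)w(OPT(S))-\frac{1}{|\mathcal X'|}\sum_{x\in\mathcal X'}\big(w(\bar P(x))-w(D(x))\big).
\]
Here comes the only genuinely different step: in place of the triangle-inequality chain $w(\bar P(x))\le w(D(x))+w(m,b)+w(\lambda(m))$, we simply note $w(\bar P(x))\le\beta$ and $w(D(x))\ge 1$ (and $w(\bar P(x))=w(D(x))$ in the degenerate case), whence $w(\bar P(x))-w(D(x))\le\beta-1$ termwise and the last sum is at most $(\beta-1)|\mathcal X'|$, giving $\frac{1}{|\mathcal X'|}\sum_x w(OPT(R(x)))\ge\big(1-\tfrac1{|\mathcal X'|}\big)w(OPT(S))-(\beta-1)$.

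It remains to turn the additive term $\beta-1$ into a multiple of $\tfrac1{|\mathcal X'|}\sum_x w(\lambda(x))$, which is where the failure of Case~2 is used. Since $\lambda(x)$ is $x$'s most-preferred edge, $w(P(x))\le w(\lambda(x))$, so $w(OPT(S))=\sum_x w(P(x))\le\sum_x w(\lambda(x))$; and since every edge has weight $\ge1$, $w(Min(S))\ge|\mathcal X'|$. Hence, when Case~2 fails,
\[
\frac{1}{|\mathcal X'|}\sum_{x\in\mathcal X'}w(\lambda(x))\ \ge\ \frac{1}{|\mathcal X'|}w(OPT(S))\ >\ \frac{c\,w(Min(S))}{|\mathcal X'|}\ \ge\ c,
\]
so $\beta-1\le\frac{\beta-1}{c}\cdot\frac{1}{|\mathcal X'|}\sum_x w(\lambda(x))$. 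Substituting this and $w(OPT(S))\le\sum_x w(\lambda(x))$ into the previous display and using $1+\frac{\beta-1}{c}=c$ gives $\frac{1}{|\mathcal X'|}\sum_x w(OPT(R(x)))\ge w(OPT(S))-\frac{c}{|\mathcal X'|}\sum_x w(\lambda(x))$, which is exactly \textbf{Case 1} with constant $c$. The main point to watch is that the crude bound $w(\bar P(x))-w(D(x))\le\beta-1$ looks far too lossy to produce a $\Theta(\sqrt\beta)$ ratio; it works precisely because the failure of Case~2 forces the chosen edges $\lambda(x)$ to have average weight exceeding $c$, and the identity $c^2=c+\beta-1$ is the break-even point balancing the two cases. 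Everything else — the degenerate-configuration bookkeeping, and the deduction of the theorem from this lemma via Proposition~\ref{onesided-proposition} and the induction of Theorem~\ref{onesided-rsd-bound} — carries over unchanged with $\sqrt2+1$ replaced by $c$.
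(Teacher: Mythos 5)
Your proof is correct and follows essentially the same route as the paper's: assume Case 2 fails, reuse the exchange inequality $w(OPT(R(x)))\ge w(OPT(S))-w(P(x))-w(\bar P(x))+w(D(x))$, replace the triangle-inequality step by the termwise bound $w(\bar P(x))-w(D(x))\le\beta-1$, and absorb the resulting additive $\beta-1$ into $\frac{1}{|\mathcal X'|}\sum_x w(\lambda(x))$ via $w(Min(S))\ge|\mathcal X'|$ and the identity $1+\frac{\beta-1}{c}=c$. The only difference from the paper is the (immaterial) order in which $w(OPT(S))\le\sum_x w(\lambda(x))$ and $|\mathcal X'|\le\frac{1}{c}w(OPT(S))$ are applied.
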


%

\begin{proof}
  Again, we use the same notation as in Section~\ref{section-onesided}.

  We'll prove Lemma~\ref{onesided-structural-restricted} by showing that if \textbf{Case 2} is not true, then  \textbf{Case 1} must be true. Suppose  \textbf{Case 2} is not true, $w(OPT(S)) > (\sqrt{\beta - \frac{3}{4}} + \frac{1}{2})w(Min(S))$. \\

  Suppose that random serial dictatorship picks $x \in \mathcal{X}'$. Just as in the proof of Lemma \ref{onesided-structural}, we obtain that

  \begin{align}
  \frac{1}{|\mathcal{X}'|} \sum_{x \in \mathcal{X}'} w(OPT(R(x)))
  & \ge (1-\frac{1}{|\mathcal{X}'|})w(OPT(S)) - \frac{1}{|\mathcal{X}'|} \sum_{x \in \mathcal{X}'} ( w(\bar{P}(x)) - w(D(x))) \label{rsd-restricted-eq1}
  \end{align}

  We know that $\forall e \in E'$, $1 \le w(e) \le \beta$. So $w(D(x)) \ge 1$, $w(\bar{P}(x)) \le \beta$, and thus
  \begin{equation}
  \label{rsd-restricted-eq2}
  \frac{1}{|\mathcal{X}'|} \sum_{x \in \mathcal{X}'} ( w(\bar{P}(x)) - w(D(x))) \le \beta - 1
  \end{equation}

  $\forall x \in \mathcal{X}'$, $w(P(x)) \le w(\lambda(x))$, so it is obvious that $w(OPT(S)) \le \sum_{x \in \mathcal{X}'} w(\lambda(x))$.\\

  $Min(S)$ is a perfect matching, so $w(Min(S)) \ge |\mathcal{X}'|$. By our assumption,
  \begin{equation}
  \label{rsd-restricted-eq3}
  |\mathcal{X}'| \le w(Min(S)) < \frac{1}{\sqrt{\beta - \frac{3}{4}} + \frac{1}{2}} w(OPT(S))
  \end{equation}

  Combining Inequalities ~\ref{rsd-restricted-eq1}, \ref{rsd-restricted-eq2}, and \ref{rsd-restricted-eq3},

  \begin{align*}
    \frac{1}{|\mathcal{X}'|} \sum_{x \in \mathcal{X}'} w(OPT(R(x)))
    & \ge w(OPT(S)) - \frac{1}{|\mathcal{X}'|} w(OPT(S)) - \frac{1}{|\mathcal{X}'|}(\beta - 1)|\mathcal{X}'| \\
    & \ge w(OPT(S)) - \frac{1}{|\mathcal{X}'|} w(OPT(S)) - \frac{1}{|\mathcal{X}'|}(\beta - 1) \frac{1}{\sqrt{\beta - \frac{3}{4}} + \frac{1}{2}} w(OPT(S))\\
    &= w(OPT(S)) - \frac{1}{|\mathcal{X}'|}(1 + \frac{\beta - 1}{\sqrt{\beta - \frac{3}{4}} + \frac{1}{2}})w(OPT(S))\\
    &= w(OPT(S)) - \frac{\sqrt{\beta - \frac{3}{4}} + \frac{1}{2}}{|\mathcal{X}'|}w(OPT(S))\\
    &\ge w(OPT(S)) - \frac{\sqrt{\beta - \frac{3}{4}} + \frac{1}{2}}{|\mathcal{X}'|}\sum_{x \in \mathcal{X}'} w(\lambda(x))
  \end{align*}
\end{proof}

This completes the proof of the theorem.
\end{proof}

\section{Lower Bound Examples}
\label{section_lower_bound}
In this section, we provide some example to study the lower bound of algorithms on maximum weight bipartite graph perfect matching, given two-sided or one-sided ordinal information.

\subsection{Lower Bound of Two-sided Ordinal Information}
\label{lower_bound_twosided}
\textbf{Example} Consider a bipartite graph $G=(\mathcal{X}, \mathcal{Y}, E)$, $\mathcal{X} = \{a, b\}$,  $\mathcal{Y} = \{c, d\}$. Let $\epsilon$ be a very small positive number. Consider two sets of weight assignment that have the same two-sided ordinal preferences in metric space: $W1: w(a,c) = w(b, d) = 1 + \epsilon$, $w(b, c) = 3$, $w(a, d) = 1$. $W2: w(a,c) = w(b, d) = 1 - \epsilon$, $w(b, c) = 1$, $w(a, d) = \epsilon$. The maximum weight perfect matching for $W1$ is $M1 = \{(a,d), (b,c)\}$, while for $W2$ is $M2 = \{(a,c), (b,d)\}$. Applying any randomized algorithm choosing $M1$ with probability $p$ and $M2$ with probability $1-p$ to these two weight settings, the optimal algorithm is when $p = \frac{1}{2}$, gives a $1.33$-approximation.

\subsection{Lower Bound of One-sided Ordinal Information}
\label{lower_bound_onesided}
\textbf{Example}
For one-sided ordinal information, consider a graph  $G=(\mathcal{X}, \mathcal{Y}, E)$, $|\mathcal{X}|$ = $|\mathcal{Y}| = N$,  $\mathcal{X} = \{x_1, x_2, ... x_N\}$,  $\mathcal{Y} = \{y_1, y_2, ..., y_N\}$. Each agent in $\mathcal{X}$ have the same preferences over agents in $\mathcal{Y}$ as $y_1 > y_2 > ... > y_N$, because of this setting, no random algorithm could distinguish agents and get a better performance than random algorithm. Assign the weights of the graph as: for a certain number $\nu \in [0, 1]$, when $i <= \nu$, $w(x_i, y_j) = 3$ for $j <= i$, all other edges have weight $1$. The maximum matching is $\{ (a_1, b_1), (a_2, b_2), ..., (a_N, b_N)\}$, with a total weight $(2\nu + 1)N$. Random matching of this graph gets an expected weight of $(\nu(\frac{1}{N} + \nu)+1)N$, when $N$ is large, the weight approaches $(\nu^2+1)N$. When $\nu = \frac{\sqrt{5} - 1}{2}$, random algorithm gets a $1.62$-approximation, which is a lower bound of one-sided ordinal information setting.

\section{Conclusion}
In this paper we quantified the tradeoffs between the amount of ordinal information available, and the quality of solutions produced by our ordinal approximation algorithms, for metric maximum-weight bipartite matchings. For example, if we are able to collect preference data through surveys, but for each extra preference we must perform a certain extra amount of market research (i.e., increasing $\alpha$ comes at a cost), then our findings would quantify how big we should make $\alpha$ in order to form a good approximation to the best possible matching. All of this is without knowing the true numerical weights, only ordinal information.

One thing to note here is that asking people to list their preference orderings, even partial preference orderings for relatively small $\alpha$, may be prohibitive. Agents are usually willing to name their top 3-10 choices, but not more than that. Notice, however, that all our algorithms can be thought of differently. For example, RSD does not actually require the preference ordering as an input. It simply needs to ask each agent a single question: what is you favorite agent who has not been matched yet? Similarly, our other algorithms can be considered to ask agents a series of questions about their preferences, all of the same form. Such questions (determining their favorite from a set) are usually much easier for agents to answer than the question of specifying a preference ordering.

One clear research direction is to relax the assumption that we can only obtain ordinal information. What if we could also obtain some numerical information, but at further cost? What is the tradeoff between quality of solution formed and the amount of numerical information we obtain? What if we could ask the agents more complex questions than ``Who is your favorite unmatched agent?", but were limited in the number of times we could ask such questions? We leave these important directions for future work.

\bibliographystyle{plain}
\bibliography{ref}

\end{document}